\def\reals{{\mathbb R}}
\def\eps{{\varepsilon}}
\def\Vor{{\rm Vor}}
\def\dist{{\rm dist}}
\def\NN{{\rm N}}
\def\D{{\cal D}}
\def\NN{{\rm N}}
\def\dd{{\sf d}}
\def\dist{{\sf dist}}
\def\prev{{\sf prev}}
\def\optdist{{\sf optdist}}
\def\Vor{{\rm Vor}}
\def\UPDATE{{\sf UPDATE}}
\def\R{{\cal R}}
\def\K{{\cal K}}
\def\U{{\cal U}}
\def\D{{\cal D}}
\newcommand{\old}[1]{{}}
\title{The Unweighted and Weighted Reverse Shortest Path Problem for Disk Graphs\thanks{
Work by H. Kaplan was partially supported by Grant 1595/19 from the Israel Science Foundation and by the Blavatnik Research Foundation. Work by M. Katz was partially supported by Grant 2019715/CCF-20-08551 from the US-Israel Binational Science Foundation/US National Science Foundation. Work by M. Sharir was partially supported by Grant 260/18 from the Israel Science Foundation.
A preliminary version of this paper appears in {\it Proc. European Sympos. Algorithms (ESA)}, 2023.} 
}
\author{Haim Kaplan\thanks{%
    School of Computer Science, Tel Aviv University, Tel Aviv, Israel;
    {\sf haimk@tauex.tau.ac.il}}
\and
Matthew J. Katz\thanks{%
    Department of Computer Science, Ben Gurion University, Beer Sheva, Israel; 
    {\sf matya@cs.bgu.ac.il}}
\and
Rachel Saban\thanks{%
    Department of Computer Science, Ben Gurion University, Beer Sheva, Israel;
    {\sf rachelfr@post.bgu.ac.il}}
\and
Micha Sharir\thanks{%
    School of Computer Science, Tel Aviv University, Tel Aviv Israel;
    {\sf michas@tauex.tau.ac.il}}
}
\newtheorem{theorem}{Theorem}
\newtheorem{lemma}{Lemma}
\newtheorem{claim}{Claim}
\begin{document}

\date{}

\maketitle


\begin{abstract}
We study the reverse shortest path problem on disk graphs in the plane. In this problem we consider the 
proximity graph of a set of $n$ disks in the plane of arbitrary radii: In this graph two disks are 
connected if the distance between them is at most some threshold parameter $r$. The case of intersection graphs is a special case with $r=0$.
We give an algorithm that, given a target length $k$, computes the smallest value of $r$ for which 
there is a path of length at most $k$ between some given pair of disks in the proximity graph.
Our algorithm runs in $O^*(n^{5/4})$ randomized expected time, which improves to $O^*(n^{6/5})$ for unit disk graphs, where all the disks have the same radius.\footnote{%
  In this paper the $O^*(\cdot)$ notation hides subpolynomial factors.}
 Our technique is robust and can be applied to many variants of the problem. One significant variant is
 the case of weighted proximity graphs, where edges are assigned real weights equal to the distance 
 between the disks or between their centers, and $k$ is replaced by a target weight $w$; that is, we seek a path whose length is at most $w$. In other variants,
 we want to optimize a parameter different from $r$,
 such as a scale factor of the radii of the disks.
 
 The main technique for the decision version of the problem (determining whether the graph with a given $r$ 
 has the desired property) is based on efficient implementations of BFS (for the unweighted case) and of 
 Dijkstra's algorithm (for the weighted case), using efficient data structures for maintaining 
 the bichromatic closest pair for certain bicliques and several distance functions. The optimization problem 
 is then solved by combining the resulting decision procedure with enhanced variants of the 
 interval shrinking and bifurcation technique of \cite{BFKKS}. 
\end{abstract}

\newpage
\section{Introduction}

In this paper we study  the \emph{reverse shortest path problem} (RSP for short) on graphs defined by disks in the plane. 

In the simplest variant of this problem,  we are given a set $P$ of $n$ points in the plane, 
two designated points $s,t\in P$, a real parameter $r>0$ and an integer $k < n$. Define $G_r$ to be the graph 
$(P,E_r)$, where the edges of $E_r$ are all the pairs $(p,q)$ such that $\|p-q\| \le 2r$.
This is also the intersection graph of the disks of radius $r$ centered at the points of $P$.
In the decision version of the RSP problem we want to determine whether $G_r$ contains a path from $s$ to $t$ with at most $k$ edges. 
In the optimization version, which is the reverse shortest path problem itself, we wish to find the smallest value $r^*$ for which $G_{r^*}$ has this property.
Both versions (decision and optimization) of this problem have received considerable attention during the 
past decade~\cite{CabelloJ15,CS,WZ}. 

\smallskip
\noindent
{\bf Our contributions: }
We give an algorithm for this problem that runs in $O^*(n^{6/5})$ randomized expected time  (where the $O^*(\cdot)$ notation hides subpolynomial factors). This improves the 
recent $O^*(n^{5/4})$-time solution of Wang and Zhao~\cite{WZ}.
In fact we study the RSP problem in a much more general context that involves a variety of intersection or
proximity graphs on a finite set of arbitrary disks in the plane, for which nontrivial bounds were not known prior to this work.

\looseness = -1
We first consider unweighted disk graphs in Section \ref{sec:unweighted}.
In this setup, we have a set $\D$ of $n$ disks in the plane of arbitrary radii. Each disk $D\in\D$ is specified by its center $c_D$ and radius $\rho_D$. 
For a given parameter $r\ge 0$ we define the \emph{proximity graph} $G_r$ by adding an edge 
between disks $D$ and $D'$ if the distance between them,
$\dist(D,D') = \|c_D - c_{D'}\| - \rho_D - \rho_{D'}$, is at most $r$. The case $r=0$ is of special interest and gives the intersection graph of the disks.\footnote{
One technical difference is that when considering proximity graphs, it is customary, although not obligatory, 
to assume that the disks are pairwise disjoint, which is certainly not an assumption that one would make for intersection graphs.}

For the decision version of this RSP problem we obtain an algorithm that runs in $O(n\log^4n)$ time, 
and for the optimization version an algorithm that runs in $O^*(n^{5/4})$ randomized expected time. Our technique generalizes 
to other versions of the optimization problem. For example, we can consider the intersection graph of the 
disks ($r=0$) and ask for the smallest scaling factor of the radii of all disks, either by a common
additive term or by a common multiplicative factor, that would make the graph contain a path of at most 
$k$ edges between a designated pair of source and target disks $D_s$ and $D_t$.

In Section \ref{sec:weighted} we generalize our results further to weighted versions of the proximity graph $G_r$. We consider two 
natural weight functions for the edges. The first sets the weight of an edge $(D,D')$ to be the distance 
$\|c_D - c_{D'}\|$ between the centers of the disks,
and the second sets the weight to be the distance $\dist(D,D') = \|c_D - c_{D'}\| - \rho_D - \rho_{D'}$
between the disks. We solve the decision problem on such weighted disk graphs, in which we want to determine
whether the shortest path in $G_r$ from $D_s$ to $D_t$ is of length at most $w$, for some specified real 
threshold $w$, by a careful implementation of
Dijkstra's algorithm, following idas introduced by Cabello and Jej\v{c}i\v{c}~\cite{CabelloJ15} (using a dynamic bichromatic closest pair structure, see below) in $O(n\log^4n)$ or 
$O(n\log^6n)$ time, depending on the type of edge weights. The optimization RSP problem is then solved in $O^*(n^{5/4})$ randomized expected time.
For weighted unit disk graphs we still get the better bound of $O^*(n^{6/5})$ time for the optimization problem.

Our decision algorithms rely on rather complex dynamic data structures (see below), which  should be avoided, 
if possible, from a practical point of view. Indeed, for unit disk graphs, there exist simpler and slightly 
more efficient implementations of BFS and Dijkstra's algorithm, in the unweighted and weighted cases, 
respectively~\cite{CabelloJ15,CS,WangX20}. However, as explained in the remarks following Theorems~\ref{thm:gen-unweighted2}~and~\ref{thm:weight2}, we cannot use them in conjunction 
with our optimization technique, for certain technical reasons. We thus present in Section~\ref{sec:udg_bfs_dijk} 
alternative implementations, based on the known grid-based techniques, which satisfy our requirements and are arguably somewhat simpler.        

\smallskip
\noindent
{\bf Our techniques:}
We achieve our results by carefully combining three main  technical ingredients. The first is an 
efficient ``serial'' implementation of parametric search, using what we call \emph{interval shrinking}
and \emph{bifurcation} procedures. This technique was first used by Ben Avraham et al.~\cite{BFKKS} for solving
problems involving the discrete and semi-discrete Fr{\' e}chet distance with shortcuts.
Here we apply a somewhat modified variant of it in the rather different context of
our RSP problem, exposing its potential of being useful for a wide range of other problems as well. 

We remark that using this
technique requires that the decision procedure access the parameter $r$ to be optimized via comparisons only, whose
outcome depends on the relation between the optimal $r^*$ and certain critical values (which in our case 
turn out to be additively weighted inter-point distances between the centers of the disks), 
on which we can apply the interval shrinking procedure in an efficient manner.
We review this technique in Section \ref{sec:preliminaries}.

\looseness=-1
The second ingredient is a dynamic nearest neighbor and a dynamic bichromatic closest pair data structures 
for additively weighted Euclidean distances. Such structures with polylogarithmic time per update 
and access were recently developed by Kaplan et al.~\cite{KaplanMRSS20} and further improved by Liu~\cite{Liu}.

The third ingredient that we use for the weighted versions of the problem is a technique that
combines nearest neighbor data structures for two different distance functions. Specifically, given a 
(dynamic) nearest neighbor data structure for a distance function $d_1$, and a (dynamic) nearest neighbor 
data structure for a distance function $d_2$, we show how we can get a (dynamic) data structure that can answer
constrained nearest neighbor queries of the form: find the closest point to a query $q$ according to 
the distance function $d_1$ among all points whose distance to $q$ according to $d_2$ is at most some threshold $r$ (which is part of the query).

\smallskip
\noindent
{\bf Previous work:}
The decision problem in the unweighted variants can be solved by running a BFS from $s$ (or from $D_s$)
in the underlying graph. Similarly, the decision problem in the weighted variants can be solved by running 
Dijkstra's shortest-path algorithm in the graph. However, the challenge is to do it efficiently, 
since the graph might have up to a quadratic number of edges. For unit-disk graphs, 
Cabello and Jej\v{c}i\v{c}~\cite{CabelloJ15} presented an $O(n\log n)$ implementation of BFS, and subsequently 
Chan and Skrepetos~\cite{CS} presented an alternative $O(n)$ implementation (after pre-sorting the points 
by their $x$- and $y$-coordinates). Moreover, Cabello and Jej\v{c}i\v{c}~\cite{CabelloJ15} also described an
$O(n^{1+\eps})$ implementation of Dijkstra's algorithm for weighted unit-disk graphs, which was followed by a 
more efficient $O(n \log^2 n)$ implementation described by Wang and Xue~\cite{WangX20}; see also~\cite{KaplanMRSS20}. 

The RSP problem in the context of unweighted unit-disk graphs was posed by Cabello and 
Jej\v{c}i\v{c}~\cite{CabelloJ15}, who observed that it can be solved conceptually easily in $O^*(n^{4/3})$ time, 
by running a binary search through the $O(n^2)$ inter-point distances (using an efficient distance selection 
algorithm). Recently, Wang and Zhao~\cite{WZ} managed to improve this bound, obtaining an algorithm that solves 
the problem in $O^*(n^{5/4})$ time. In the context of weighted unit-disk graphs, the situation is similar. 
The RSP problem can be solved easily in $O^*(n^{4/3})$ time, but Wang and Zhao~\cite{WangZ22} were able 
to obtain an improved $O^*(n^{5/4})$-time solution for that version too.\footnote{%
  The $O^*(n^{6/5})$ bound for the RSP problem in both unweighted and weighted unit disk graphs was already claimed in an unpublished manuscript~\cite{KS:opt}, which appeared shortly after the first RSP paper of Wang and Zhao. However, this manuscript overlooks an issue that may arise when using an off-the-shelf decision problem, see Section~\ref{sec:udg_bfs_dijk}.} 

As far as we know, both the decision and optimization problems have not been studied in the context of general disk graphs.

\section{Preliminaries}
\label{sec:preliminaries}

In this section we provide
 necessary background on the serial parametric search technique of Ben Avraham et al.~\cite{BFKKS}.

In its basic form, the technique is applicable when the threshold parameter $r^*$ is the distance between a pair of input points.\footnote{%
  The technique is more broadly applicable, though. It applies in situations where $r^*$ belongs to a set of critical values, each determined by a pair of input objects, so that we have an efficient selection procedure that can find the $k$-th smallest critical value for any input parameter $k$.}
There are $O(n^2)$ such distances, and the most na\"ive algorithm simply finds $r^*$ by running a binary search 
through them, guided by the decision procedure at each comparison. 
A basic improvement is to implement the binary search using an efficient procedure for distance selection, such as the one in \cite{AASS} or \cite{KS}, which runs in $O^*(n^{4/3})$ time.
Up to an additional logarithmic factor, this dominates the cost of the whole procedure 
(assuming that the decision procedure is more efficient; as we show, this is indeed the case in all the RSP problems studied in this paper).

The technique of \cite{BFKKS} is a combination of two subprocedures, referred to as the \emph{interval shrinking} and the \emph{bifurcation} procedures. 
The interval shrinking procedure receives an integer parameter $L\ll \binom{n}{2}$, and computes an interval $I\subset\reals$ that contains 
$r^*$ and at most $L$ \emph{critical values}, namely inter-point distances. As shown in \cite{BFKKS}, this can be done
in $O^*(n^{4/3}/L^{1/3})$ 
expected time.

We then run the bifurcation procedure, which simulates the execution of the decision procedure at the (unknown)
threshold $r^*$, as in the standard parametric search technique~\cite{Megiddo83}.
When the simulation reaches a comparison of $r^*$
with some concrete value $r$, we know the answer to the comparison when $r$ lies outside $I$. However, when $r\in I$
we bifurcate, following both possibilities $r^* > r$ and $r^* < r$ (the case $r^* = r$ will be handled too; see below). This produces a \emph{bifurcation tree} $T$, 
which we expand until we either collect sufficiently many bifurcations, or until we reach a sufficiently large
uniform depth of $T$. In either case we stop this simulation phase, resolve all collected comparisons by a binary search through
them, using the (unsimulated) decision procedure to guide the search, and start a new bifurcation phase from the unique leaf of $T$ 
whose associated (shrunk) interval of critical values contains $r^*$. The binary search will also identify 
$r^*$ when it is one of the critical values through which it searches, and then terminate the entire procedure 
right away. In the worst case this will happen by the time when the entire decision procedure has been simulated.

We comment that this method is viable when the decision procedure is not known to have a parallel version of 
small depth, which is required in the standard parametric search technique. If such a parallel version were 
available, we could apply standard parametric search, and obtain a significantly faster algorithm. The RSP problem
seems to be inherently sequential, as it seeks a path in a graph, and is indeed amenable to the technique of \cite{BFKKS}.

As shown in \cite{BFKKS}, the bifurcation procedure can be implemented to run in $O^*(L^{1/2}D(n))$ time, where $D(n)$
is the cost of the decision procedure. A suitable choice of $L$ yields an overall (randomized expected) running time 
$O^*(n^{6/5})$, for (suitable) decision procedures that run in nearly linear time, as do the decision procedures 
for all the variants of the RSP problem considered in this paper. 

We remark, though that the exponent $6/5$ is obtained in cases where the distance selection procedure, on which 
the interval shrinking is based, runs in $O^*(n^{4/3})$ time. In some of our results we will need distance 
selection in three dimensions, using more general kinds of distance functions. The resulting distance selection
procedure then runs in $O^*(n^{3/2})$ time, and then the overall cost will be $O^*(n^{5/4})$
rather than $O^*(n^{6/5})$; see below for more details.

\section{Reverse shortest paths for unweighted disk graphs}
\label{sec:unweighted}

Here we are given a set $\D$ of $n$ disks in the plane, of arbitrary radii, each  parameterized by its center $c_D$ and radius $\rho_D$. We consider the \emph{intersection graph} $G^\times=(\D,E)$, where the edges of $G^\times$ are the intersecting pairs of disks. 
Formally, $E$ consists of all pairs $(D,D')$ for which
$
\|c_D-c_{D'}\| \le \rho_D + \rho_{D'} .
$

In the decision problem, we are given two designated disks $D_s$ and $D_t$, and an integer parameter $k$, 
and the goal is to determine whether $G^\times$ contains a path of at most $k$ edges from $D_s$ to $D_t$.

In the optimization problem we scale the radii of the disks
(without changing their centers), either by an additive term or by a multiplicative factor, and seek the smallest
scaling parameter that makes $G^\times$ have the desired $s$-$t$ path.

Towards the end of the section, we consider the special and important case of unit disk graphs, for which we obtain a better bound.

\subsection{The decision procedure}

We run BFS on $G^\times$ from $D_s$. Suppose that we have already discovered 
all disks at some level $i$ of the BFS. We expand the BFS to level $i+1$ as follows. We consider each 
disk $D$ at level $i$ in turn, and look for its nearest neighbor among the disks that have not yet been 
discovered. To do so, we maintain a dynamic additively-weighted Voronoi diagram $\Vor(\U)$ of the set $\U$ of 
all the disks that the BFS has not yet reached, where the additive weight of a disk $D$ is $-\rho_D$. 
Initially, we are at level $0$ of the BFS, which includes only $D_s$, and we set $\U = \D\setminus \{D_s\}$.

We search for the nearest neighbor $D'$ of $D$ in $\Vor(\U)$. If the weighted distance between $D$ and 
$D'$ is at most $\rho_D$, that is, if 
$\|c_D-c_{D'}\| - \rho_{D'} \le \rho_D$, we conclude that
$\|c_D-c_{D'}\| \le \rho_D + \rho_{D'}$, so we
add $D'$ to level $i+1$, delete it from $\U$, and query the updated Voronoi diagram again for 
the new nearest neighbor of $D$. We continue querying the updated $\Vor(\U)$ with $D$ until the distance between 
$D$ and its nearest neighbor is larger than $\rho_D$. When this happens we replace $D$ by the next disk at level 
$i$ and repeat this process. When we finish processing in this manner all disks of level $i$, we have 
discovered all disks at level $i+1$, and we move on to level $i+1$.

Consider a sequence of queries to the Voronoi diagram with a disk $D$ at some level of the BFS. 
We can charge each of these queries but the last, to a new disk that we add, following this query,
to the BFS tree. Therefore the running time of this decision procedure is dominated by the cost of
$O(n)$ queries and $n$ deletions from $\Vor(\U)$. The most efficient implementation of such a structure,
with running time $O(n\log^4n)$, is due to Liu~\cite{Liu}; see also the earlier study~\cite{KaplanMRSS20}, with a worse polylogarithmic factor.

In summary, we have shown:
\begin{theorem} \label{thm:dec-gen2}
	Given $\D$, $D_s$, $D_t$, and $k$ as above, we can determine whether there exists a path of at most $k$ edges from 
	$D_s$ to $D_t$ in the intersection graph $G^\times$ associated with $\D$, in $O(n\log^4n)$ time.
\end{theorem}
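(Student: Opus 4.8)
The plan is to describe the BFS-based decision procedure already sketched in the text and verify that it runs in the claimed $O(n\log^4 n)$ time. First I would set up the data structure: a dynamic additively-weighted nearest-neighbor (equivalently, additively-weighted Voronoi diagram) structure on the set $\U$ of disks not yet reached by the BFS, where disk $D$ is represented by its center $c_D$ with additive weight $-\rho_D$; then the weighted distance from a query disk $D$ (i.e.\ from $c_D$) to a stored disk $D'$ equals $\|c_D-c_{D'}\|-\rho_{D'}$, so that $D$ and $D'$ intersect in $G^\times$ exactly when this weighted distance is at most $\rho_D$. By the result of Liu~\cite{Liu} (see also \cite{KaplanMRSS20}), such a structure supports insertions, deletions, and nearest-neighbor queries in $O(\log^4 n)$ time each (amortized/worst-case as stated there). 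Initialize $\U=\D\setminus\{D_s\}$, place $D_s$ at level $0$, and proceed level by level.

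Next I would spell out the level-expansion step and argue its correctness. When expanding from level $i$ to level $i+1$, for each disk $D$ at level $i$ we repeatedly query $\Vor(\U)$ for the nearest neighbor $D'$ of $c_D$; while the returned weighted distance $\|c_D-c_{D'}\|-\rho_{D'}$ is at most $\rho_D$, the disk $D'$ is adjacent to $D$ in $G^\times$, so we assign $D'$ to level $i+1$, delete it from $\U$, and requery. Once the nearest neighbor's weighted distance exceeds $\rho_D$, no undiscovered disk is adjacent to $D$, so we move to the next disk of level $i$. A standard inductive argument over the levels shows that this discovers exactly the disks whose graph-distance from $D_s$ equals $i+1$: every neighbor of a level-$i$ disk that is still in $\U$ is found (since we keep querying until the nearest such disk is non-adjacent), and no disk is placed at a level smaller than its true BFS depth because a disk is only removed from $\U$ when it is first discovered. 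We run this until $D_t$ is reached, reporting success if this happens within $k$ levels, and failure otherwise (or once the BFS terminates).

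For the running time, the key accounting is the charging argument in the text: in the sequence of queries issued to $\Vor(\U)$ for a fixed disk $D$ at a fixed level, every query except the last one is immediately followed by the deletion of a previously undiscovered disk from $\U$; charge that query to the deleted disk. Since each disk is deleted from $\U$ at most once over the whole algorithm, the total number of ``non-terminal'' queries is at most $n$, and the number of ``terminal'' queries is at most the number of (disk, level) processing steps, which is $O(n)$ as well since each disk belongs to exactly one level. Hence the algorithm performs $O(n)$ queries and $O(n)$ deletions on $\Vor(\U)$, plus the initial construction, for a total of $O(n\log^4 n)$ time, establishing the theorem.

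I do not expect a serious obstacle here, since all the heavy lifting (the dynamic additively-weighted Voronoi structure) is imported from \cite{Liu}; the only points needing care are (i) the reduction of the intersection test $\|c_D-c_{D'}\|\le\rho_D+\rho_{D'}$ to a single additively-weighted nearest-neighbor comparison against threshold $\rho_D$ — which is the reason for choosing the additive weight $-\rho_{D'}$ on the stored side while the query side contributes $\rho_D$ as the comparison threshold — and (ii) making the charging argument airtight, i.e.\ confirming that a query is ``wasted'' (not charged to a new deletion) only when it is the last query for the current disk at the current level, so that the number of such queries is $O(n)$ rather than $\omega(n)$. Both are routine once the setup is fixed.
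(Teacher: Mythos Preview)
Your proposal is correct and follows essentially the same approach as the paper: run BFS from $D_s$, maintain the undiscovered disks in a dynamic additively-weighted nearest-neighbor structure (weight $-\rho_{D'}$) so that the intersection test becomes a threshold comparison against $\rho_D$, repeatedly query-and-delete to expand each level, and bound the total work by the same charging argument (all but the last query for each processed disk is charged to a deletion), invoking Liu's $O(\log^4 n)$ per-operation bound for the $O(n\log^4 n)$ total. There is nothing to add.
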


\subsection{The optimization procedure}

We first consider the case of scaling the radii by an additive term. Thus each disk $D\in\D$ is assigned 
the radius $\rho_D + \alpha$, for some common additive parameter $\alpha$, and we seek the minimum value 
$\alpha^*$ of $\alpha$ for which the intersection graph of the modified disks, now denoted by $G^\times_\alpha$, 
has the desired $s$-$t$ path. In principle $\alpha$ could also be negative, as long as no radius becomes 
negative, but for simplicity we only consider the case $\alpha>0$.

We simulate the decision procedure at the unknown optimal value $\alpha^*$ by using a bifurcation procedure. 
Before starting the simulation, though, we perform an interval-shrinking step, as described in the introduction
and in Section~\ref{sec:preliminaries}. 

\smallskip
\noindent
{\bf Interval shrinking.}
Recall that this step, as introduced in \cite{BFKKS}, receives an integer threshold parameter $L$ and
produces an interval $I_0\subset\reals$ that contains
$\alpha^*$ and at most $L$ other critical values, where a value $\alpha$ is critical if the outcome of
a comparison changes as we go past $\alpha$. In the original formulation, the critical values were
inter-point distances in the plane, and the resulting algorithm ran in $O^*(n^{4/3}/L^{1/3})$ 
randomized expected time. 

Here the setup is different. The comparisons that the decision procedure performs are tests whether expressions of the form
$
\|c_D-c_{D'}\| - \rho_D - \rho_{D'} - 2\alpha
$
are positive or negative. The critical values of the parameter $\alpha$ are thus of the form
$
\frac12 \left( \|c_D-c_{D'}\| - \rho_D - \rho_{D'} \right) .
$
The original mechanism of \cite{BFKKS} is based on distance selection. Here we need a variant in which the basic step is to bound 
the number of these new critical values in a given interval $(\alpha_1,\alpha_2)$. 
We turn this problem into a range searching problem, where the disks of $\D$ serve as both data and query objects. 
Specifically, each disk $D$ is mapped to the point $(c_D,\rho_D)$ in $\reals^3$, and also to the range
$
\sigma_D = \{ D' \mid 
2\alpha_1 \le \|c_D-c_{D'}\| - \rho_D - \rho_{D'} \le 2\alpha_2 \} ,
$
which is a conical shell in 3-space\footnote{%
  It is in fact a truncated conical shell, since we only consider the range $\rho_{D'} \ge 0$.}
(see Figure~\ref{fig:shells}(a)). Note that the ranges have three degrees of freedom, and that the problem 
is symmetric, so that ranges can be represented as points in $\reals^3$ and points as ranges. In other words, we have
a symmetric batched range searching problem in $\reals^3$, involving $n$ points and $n$ semi-algebraic ranges.
Using standard, cutting-based decomposition techniques in $\reals^3$, such as in \cite{Ag:rs,AgarwalKS22}, 
we can implement the range searching step to run in randomized expected time $O^*(n^{3/2})$. Combining this
with parametric search, as in the standard distance selection procedure~\cite{AASS},
we can implement the distance selection procedure to also run in randomized expected time $O^*(n^{3/2})$.

\begin{figure}[hbt]
	\centering
	\includegraphics[scale=0.7]{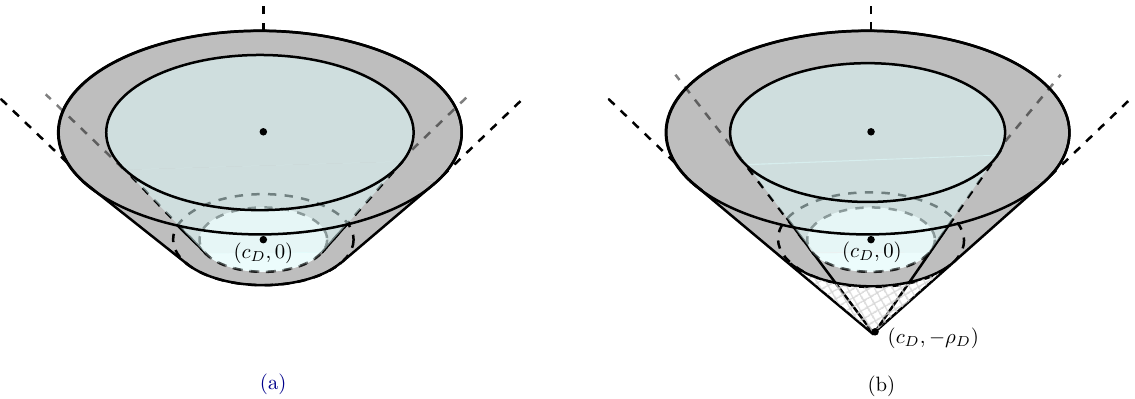}
	\caption{The range $\sigma_D$ (in grey) when scaling by an additive term (a) and by a multiplicative factor (b). The inner and outer radii of the annulus centered at $(c_D,0)$ are $\rho_D+2\alpha_1$ and $\rho_D+2\alpha_2$, respectively, in (a) and $\rho_D\lambda_1$ and $\rho_D\lambda_2$, respectively, in (b).}
	\label{fig:shells}
\end{figure}

Extending the machinery in \cite{BFKKS}, we can convert this distance selection technique to an interval-shrinking
procedure that receives a parameter $L \ll \binom{n}{2}$ and yields an interval $I = (\alpha_1,\alpha_2)$ that
contains the optimum value $\alpha^*$ and  at most $L$ critical values. A suitable modification of the
analysis in \cite{BFKKS} shows that the algorithm runs in 
$O^*(n^{3/2}/L^{1/2})$ randomized expected time. 

\smallskip
\noindent
{\bf Bifurcation.}
We now present the basic bifurcation procedure. This procedure, sometimes with a 
few enhancements and modifications, is used in all our algorithms for the various variants of the RSP problem.
We refer the reader to \cite{BFKKS} where a similar procedure has been used.

Our simulation proceeds in \emph{phases}, where in each phase we construct a bifurcation tree $T$ 
that represents some portion of the execution of the BFS, simultaneously for all values of $\alpha$ in some interval $I$. 
Initially $I = I_0$, but it will keep on shrinking as the simulation proceeds. 
Each node $b$ of $T$ is associated with an interval $I^b = (\alpha_1,\alpha_2) \subseteq I$, such that, up to the state of simulation
represented by $b$, the BFS proceeds in an identical manner for all values $\alpha\in I^b$. We continue to simulate the BFS
at $b$. At each comparison of (the unknown) $\alpha^*$ with some concrete value $\alpha$, we either resolve the comparison in a
unique manner when $\alpha\notin I^b$, or else bifurcate, meaning that we create two children $u$ and $v$ of $b$, 
assign $I^u := (\alpha_1,\alpha)$ and $I^v := (\alpha,\alpha_2)$, and continue to expand $T$ 
at $u$ and at $v$, at each of which we know the outcome of the above procedure 
(the possibility that $\alpha^* = \alpha$ will be tested later).

For each node $b$, let $y_b$ denote the amount of work performed at $b$ so far by the simulation (including comparisons that were fully 
resolved),
and let $y_b^-$ denote the sum of the quantities $y_a$ over all proper ancestors $a$ of $b$, 
\emph{excluding the root}. We refer to the cost at the root as the \emph{initialization cost} of the tree, and denote it as $C_0(T)$.

We stop the expansion of $T$ at a node $b$ when
$y_b^- + y_b = Y$, where $Y$ is a threshold parameter that will be determined later. That is,
we stop the simulation at node $b$ as soon as $y_b^-$ plus the (potentially incomplete) work done so far at $b$ 
becomes $Y$. We refer to such nodes $b$ as \emph{incomplete leaves} of $T$. 
We then continue the expansion of $T$ at other nodes. 

A subtle issue, that we will address later in more detail, is that when we back up from an incomplete node $b$
to explore other branches of $T$, we need to restore the state of execution at the suitable ancestors of $b$. See below for details.

We stop the entire construction of $T$ as soon as
one of the following two conditions occurs:
\begin{description}
\item{(i)}
We collect $X$ bifurcations, for another threshold parameter $X$ that will be determined later.
\item{(ii)}
All the leaves of $T$ are incomplete.
\end{description}
When either of these two conditions happens, we take all the (at most $X$) critical values of the 
bifurcations at the inner nodes of $T$, and run a binary search through them, using
the unsimulated decision procedure to guide the search. This takes $O(D(n)\log n)$ time, and yields the leaf $w$ whose range $I^w$ contains $r^*$.
It is also possible that the binary search will detect that one of the critical values it searches through is $r^*$ 
itself. In this case the entire procedure is terminated, and $r^*$ is output.
Otherwise, this ends a phase of the simulation. We start a new phase (if the simulation has not already ended) with $w$
as the root of the tree, and with $I^w$ as the critical interval. Note that this will cause the work already done
at $w$ to be repeated, and it is possible that the cost of this work is much larger than $Y$. However, by charging the
incomplete work at $w$ to the initialization cost $C_0(T')$ of the next tree $T'$, we at most double this cost, so
this will not affect the overall asymptotic bound 
on the performance of the procedure; see below for details.

\smallskip
\noindent
{\bf Restoring the execution state.}
The issue of restoring the state at nodes we back up to, as mentioned earlier, is more acute in our specific 
application, because part of this state includes the Voronoi diagram that our procedure maintains dynamically.
Although there are persistence-based techniques that can efficiently maintain all versions of the diagram,
they are fairly involved, and we opt not to use them.
Instead, we use the following simple approach, which also takes care of all aspects of restoring the state.

Specifically, we expand the bifurcation
tree in a depth-first manner, so that at each node we first recursively construct the subtree of its left child 
and only then the subtree of its right child. We thus first expand the leftmost path of $T$, and slowly proceed
to the right, backing from a node to its parent, and then proceed to the right child, or further up to the 
grandparent, and so on. When we back up from a node $w$ to its parent $v$, we simply undo the operations performed at $w$,
including the updates that were done to the diagram, 
in reverse order, replacing each deletion of a disk by the corresponding reinsertion. This requires us to maintain
a log of the updates performed at each node, as well as a log of the other operations, but is otherwise
a reasonably simple procedure, which does not affect the asymptotic running time and storage bounds.

\subsection{Analysis}

By construction, a single phase produces a tree $T$ that has
at most $X$ binary nodes, and the cost of producing each path of $T$ is at most $Y$, ignoring the initialization 
cost $C_0(T)$. This is easily seen to imply that the cost of generating $T$ is $O(XY + C_0(T))$. Indeed, the cost at
each node of $T$, other than the root, is certainly at most $Y$, and there are $O(X)$ such nodes 
The cost of the subsequent binary search through the critical values is $O(D(n)\log n)$. Hence the overall cost of a single phase is
$
O(XY + C_0(T) + D(n)\log n) .
$
The number of phases is estimated as follows. If the phase terminates because of Condition (i), it has discovered $X$
critical values among those in the current critical interval, which are outside the new critical interval $I^w$.
Hence the number of such phases is at most $L/X$.\footnote{%
  This is a rather weak aspect of the analysis. For the bound $L/X$ to materialize, the $X$ critical values 
  of each phase must form a prefix or a suffix of the sequence of critical values in the current interval 
  $I$, or rather, more precisely, be near the beginning or the end of the sequence. In general, when these
  critical values are more uniformly spread within $I$, the number of such phases should be much smaller. 
  It is a challenging open problem to turn this intuition into an improved procedure, if possible.}

A phase that terminates because of Condition (ii) consumes $Y$ of the total cost of the decision procedure: 
when we pass to the next phase we follow a single path of the current $T$, but all paths use at least $Y$ of the
cost, excluding the work done at the root. Note that, by construction, the part of the simulation performed along 
a path of the tree in one phase, excluding the work performed at the root, is disjoint from the part performed
along a path of the tree in a different phase. This is easily seen to imply that the number of phases of type (ii) is at most $D(n)/Y$.

The sum $\sum_T C_0(T)$ of the initialization costs of all the trees is at most $D(n)$, because these initializations 
perform pairwise disjoint portions of the decision procedure. (Observe that execution at a root is always
completed, no matter how expensive it is.)

We set our parameters so that
$
\frac{L}{X} = \frac{D(n)}{Y} \text{ and } XY = D(n)\log n .
$
That is, we choose 
$
X = L^{1/2}\log^{1/2}n \text{ and } 
Y = D(n)\log^{1/2}n/L^{1/2} ,
$
and obtain that the overall cost of the simulation is $O(L^{1/2}D(n)\log^{1/2} n)$ (this clearly also subsumes the cost $\sum_T C_0(T)$).

In total, the cost of the optimization procedure is
\[
O^*\left(\frac{n^{3/2}}{L^{1/2}} + L^{1/2}D(n) \right) = 
O^*\left(\frac{n^{3/2}}{L^{1/2}} + L^{1/2} n \right).
\]
We balance these two terms by choosing $L = n^{1/2}$, and obtain a total cost of $O^*(n^{5/4})$. 
This result is part of the summary in Theorem~\ref{thm:gen-unweighted2} (which also includes other variants of the problem), given below.

\subsection{Other variants and unit disks}

\smallskip
\noindent
{\bf Scaling by a multiplicative factor.}
Consider first the case of intersection graphs where the radii are scaled by a common multiplicative factor $\lambda>0$. In this case the critical value induced by a pair of disks $D$, $D'$ satisfies
$
\|c_D-c_{D'}\| - \lambda\rho_D - \lambda\rho_{D'} = 0 , 
$ or 
$
{\displaystyle
\lambda = \frac{ \|c_D-c_{D'}\| }{\rho_D + \rho_{D'} } } .
$
As above, the algorithm requires a procedure that computes the number of critical values in an interval
$(\lambda_1,\lambda_2)$. So we convert this task to batched range searching in three dimensions, where the
ranges are now
\[
\sigma_D = \{ D' \mid 
\lambda_1 (\rho_D + \rho_{D'}) \le \|c_D-c_{D'}\| \le \lambda_2 (\rho_D + \rho_{D'}) \} .
\]
These too are (truncated) conical shells, but here the bounding cones have a common apex and different opening angles (see Figure~\ref{fig:shells}(b)).
Other than this new type of ranges, the preceding machinery proceeds verbatim. The interval shrinking runs in 
$O^*(n^{3/2}/L^{1/2})$ expected time and the bifurcation procedure runs in $O^*(L^{1/2}n)$. With a proper choice of $L$, the overall cost is $O^*(n^{5/4})$ expected time.

\smallskip
\noindent
{\bf Proximity graphs.}
Next, consider the case of proximity graphs.
In this case, we assume that the disks of $\D$ are pairwise disjoint, and we are given an additional parameter $r > 0$. (For $r=0$ we get the intersection graph $G^\times$.) The set of edges of the proximity graph $G_r$ consists of all pairs of disks $(D,D')$ for which
$
\dist(D,D') = \|c_D - c_{D'}\| - \rho_D - \rho_{D'} \le r .
$
In the RSP problem for proximity graphs, we seek the smallest value of $r$ for which $G_r$ has the desired $s$-$t$ path. 
It is easy to see that this problem can be reduced to the (additive version of the) corresponding problem for 
intersection graphs, by simply adding $r/2$ to the radius of each of the disks. 
In the optimization procedure, the critical values are of the form $\|c_D - c_{D'}\| - \rho_D - \rho_{D'}$.
Thus, except for the factor $\frac12$ used earlier, the procedure is essentially identical to the earlier one.

\smallskip
\noindent
{\bf Unit disks.}
Finally, consider the special case of unit disks, i.e., where all the radii are equal, and consider the 
intersection graph of the disks. In this case, we get a better bound, since the critical values are merely 
(one half of the) inter-point distances, and hence the interval shrinking step can be performed in 
$O^*(n^{4/3}/L^{1/3})$ randomized expected time, as in \cite{BFKKS}. Modifying the expression for the overall running time accordingly, we get \\
$
{\displaystyle O^*\left(\frac{n^{4/3}}{L^{1/3}} + L^{1/2} n \right) }
$
randomized expected time.
We now balance these two terms by choosing $L = n^{2/5}$, and obtain a total cost of $O^*(n^{6/5})$.
In summary, we have shown:
\begin{theorem} \label{thm:gen-unweighted2}
The reverse shortest path problem for unweighted intersection or proximity graphs of arbitrary disks
in the plane can be solved in $O^*(n^{5/4})$ randomized expected time. This bound applies to all the variants
of the problem listed above. In the case of unit disks, where all radii are equal, the problem can be solved in $O^*(n^{6/5})$ randomized expected time. 
\end{theorem}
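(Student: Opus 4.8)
The plan is to combine the near-linear decision procedure of Theorem~\ref{thm:dec-gen2} with the interval-shrinking and bifurcation machinery of~\cite{BFKKS}, suitably adapted to the kinds of critical values that arise in each variant. First I would recall that the decision procedure runs a BFS on the intersection graph using a dynamic additively-weighted Voronoi diagram, so its cost is $D(n) = O(n\log^4 n)$; the essential point is that the only way it accesses the parameter being optimized (the additive term $\alpha$, the multiplicative factor $\lambda$, or the proximity threshold $r$) is through sign tests of expressions such as $\|c_D-c_{D'}\| - \rho_D - \rho_{D'} - 2\alpha$, so the set of critical values is a set of $O(n^2)$ explicitly describable quantities, one per pair of disks, on which an efficient selection procedure can be built.

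Next I would construct the interval-shrinking step. Its core subroutine must count, for a query interval of parameter values, how many critical values it contains; I would cast this as a symmetric batched semi-algebraic range-searching problem in $\reals^3$, mapping each disk $D$ both to the point $(c_D,\rho_D)$ and to the (truncated) conical-shell range $\sigma_D$ determined by the interval endpoints. Standard cutting-based range searching in three dimensions~\cite{Ag:rs,AgarwalKS22} solves this in $O^*(n^{3/2})$ time, and combining it with the parametric-search framework for distance selection~\cite{AASS} yields a selection, hence an interval-shrinking, procedure that on input $L$ returns an interval $I_0$ containing the optimum and at most $L$ other critical values, in $O^*(n^{3/2}/L^{1/2})$ randomized expected time. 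For unit disks the critical values are merely halved interpoint distances, so the original distance selection of~\cite{BFKKS} applies verbatim and the cost improves to $O^*(n^{4/3}/L^{1/3})$.

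Then I would run the bifurcation procedure: simulate the BFS at the unknown optimum, branching whenever a sign test compares the optimum with a value inside the current critical interval, and grow the bifurcation tree in depth-first order so that backing up from a node is performed by replaying its update log in reverse (reinserting into the Voronoi diagram the disks that were deleted along that branch), which restores the dynamic structure with no asymptotic overhead. A phase ends once $X$ bifurcations have been collected, or once all leaves are incomplete at work-threshold $Y$; a guiding binary search through the at most $X$ collected critical values, using the unsimulated decision procedure, selects the unique leaf whose interval still contains the optimum, and either terminates (if the optimum is one of those values) or becomes the root of the next phase. Charging gives at most $L/X$ phases of the first type and $D(n)/Y$ of the second, with total initialization cost $O(D(n))$; choosing $L/X = D(n)/Y$ and $XY = D(n)\log n$ makes the simulation cost $O^*(L^{1/2}D(n)) = O^*(L^{1/2}n)$, so the total is $O^*(n^{3/2}/L^{1/2} + L^{1/2}n)$, minimized at $L = n^{1/2}$ to yield $O^*(n^{5/4})$. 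In the unit-disk case the total is $O^*(n^{4/3}/L^{1/3} + L^{1/2}n)$, minimized at $L = n^{2/5}$ to yield $O^*(n^{6/5})$. Finally I would note that the additive-scaling, multiplicative-scaling, and proximity-graph variants run through the identical argument, the only change being the algebraic form of $\sigma_D$ (a conical shell with common apex and varying opening angle in the multiplicative case) and a harmless factor of $\tfrac12$ in the proximity reduction.

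The main obstacle I anticipate is the interaction between the heavy dynamic Voronoi-diagram state and the bifurcation tree: the framework of~\cite{BFKKS} was not originally phrased to carry such a structure across branches, so the key work is to justify that the depth-first traversal with an undo log restores the structure correctly and cheaply, and that repeating the work performed at the root of each new phase at most doubles the initialization cost $\sum_T C_0(T)$, leaving the bound $O(D(n))$ intact. A secondary technical point is verifying that the three-dimensional range-searching step really runs in $O^*(n^{3/2})$ for the semi-algebraic conical-shell ranges, and that it composes with parametric search into a genuine selection procedure (not merely a counting procedure) within the same time bound, so that the interval-shrinking cost $O^*(n^{3/2}/L^{1/2})$ is valid.
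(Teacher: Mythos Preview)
Your proposal is correct and follows essentially the same route as the paper: the same near-linear BFS decision procedure, the same reduction of the interval-shrinking counting step to symmetric batched range searching with conical-shell ranges in $\reals^3$ (yielding $O^*(n^{3/2}/L^{1/2})$, or $O^*(n^{4/3}/L^{1/3})$ for unit disks), the same depth-first bifurcation with an undo log to restore the dynamic Voronoi state, and the same balancing of $X$, $Y$, and $L$. The two technical concerns you flag---correctness and cost of the undo-based state restoration, and the doubling argument for the root-initialization cost---are exactly the points the paper addresses, and your treatment of the multiplicative and proximity variants matches as well.
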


\noindent{\bf Remark.}
In the case of unit disks, the decision procedure itself can be implemented to run faster than $O(n\log^4n)$. 
Chan and Skrepetos~\cite{CS} even present an algorithm that runs in linear time (after a preliminary sorting step);
see also~\cite{CabelloJ15}. However, the critical values produced by their procedure are not all inter-point 
distances, or, more precisely put, some of the critical values are determined by triples of input points
rather than pairs. This affects the running time of the optimization procedure. To avoid the use of complex 
dynamic data structures, we modify the algorithm of Chan and Skrepetos, so that it can be combined with the optimization procedure, see Section~\ref{sec:udg_bsf}.

\section{Reverse shortest paths for weighted disk graphs} \label{sec:weighted}

\label{sec:sp-dg}

Recall that, for a set $\D$ of $n$ disks in the plane and an additional parameter
$r\ge 0$, the proximity graph $G_r$ of $\D$ has an edge between every pair of disks such that 
\begin{equation} 
\label{eq:distD}
\dist(D,D') := \|c_D - c_{D'}\| - \rho_D - \rho_{D'} \le r.
\end{equation}
The intersection graph $G^\times$ of $\D$ is the proximity graph of $\D$ for $r=0$, except that in proximity 
graphs we usually assume (but do not have to assume) that the disks are pairwise disjoint.

We next assign weights to the edges.
There are two natural choices for these weights. One is to define the weight of an edge $(D,D')$ as the distance $\|c_D - c_{D'}\|$ between the centers.
The other is to define the weight of $(D,D')$ to be $\dist(D,D')$ if 
$\dist(D,D') \ge 0$ and $0$ otherwise (that is $0$ if the disks intersect). Note that for intersection graphs only the first choice gives a reasonable weight function.

The case of weighted unit disk graphs is a special case of the problem for intersection graphs
$G^\times$ with weights equal to the distances between the centers. In this case we have $\|c_D - c_{D'}\| = \dist(D,D')  + 2$. 

\subsection{The decision procedure}
\label{sec:4.1}

We are given two disks $D_s$, $D_t$ of $\D$, and the proximity graph $G_r$ 
of $\D$ for some $r\ge 0$, weighted by one of the weight functions above.
We  want to compute the (length of the) shortest path $\pi(D_s,D_t)$ in $G$ from $D_s$ to $D_t$. 
We solve this by a clever implementation of Dijkstra's algorithm. The required sophistication of the 
implementation depends on the type of the weight function we use.

\paragraph{The edge weights are the distances between the disks.}
We start with the simpler case in which the weight of
$(D,D')$ is $\dist(D,D')$ (or $0$ if the disks intersect). Specifically, the same function that defines the graph 
via the threshold $r$ (Equation  (\ref{eq:distD})) is also used to define the weights. 

The high-level approach is similar to that proposed by Cabello and Jej\v{c}i\v{c}~\cite{CabelloJ15},
although their original algorithm was given only for intersection graphs of unit disks.
We briefly recall this technique, adapted to our context.

We maintain a decomposition of $\D$ into three disjoint subsets $\R$, $\K$ and $\U$, where $\R\cup\K$ is the set
of disks $D$ for which we already have the correct distance label $\delta(D)$ (the length of the shortest path from $D_s$), and $\U$ is the remainder of $\D$, consisting
of disks whose distance labels have not yet been determined.
$\R$ (resp., $\K$) is the set of \emph{active} (resp., \emph{dead}) disks of $\R\cup\K$, meaning that the 
disks of $\R$ still have outgoing edges in $G_r$ to disks of $\U$, while disks of $\K$ have no such edges. 
Initially, $\R = \{D_s\}$, $\K = \emptyset$, and $\U = \D\setminus\{D_s\}$.

A single step of our implementation of Dijkstra's algorithm, a so-called \emph{Dijkstra step}, picks the closest pair $(D,D')$ in $\R\times\U$, where the modified distance between $D$ and $D'$ is defined as
\begin{equation} \label{eq:w}
\dd(D,D') = \delta(D) + \dist(D,D') .
\end{equation}

Then we need to verify that $(D,D')$ is indeed
an edge of $G_r$. If this is not the case, i.e., $\dist(D,D') > r$, we move $D$ from $\R$ to $\K$, 
and never process $D$ again.
This action is justified by the following variant of~\cite[Lemma 6]{CabelloJ15}:
\begin{lemma} \label{lem6}
Let $D$ and $D'$ be as above, and assume that $\dist(D,D') > r$. Then $\dist(D,D'') > r$ for every disk $D''\in\U$.
\end{lemma}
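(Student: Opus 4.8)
The plan is to use the fact that the term $\delta(D)$ appearing in the modified distance of Equation~(\ref{eq:w}) is independent of the second argument, so that over the set $\U$ the quantity $\dd(D,\cdot)$ and the quantity $\dist(D,\cdot)$ are minimized at the same disk. Concretely, I would argue as follows. By construction of the Dijkstra step, $(D,D')$ is the closest pair of $\R\times\U$ under the distance $\dd$; in particular, restricting attention to pairs whose first coordinate is the fixed disk $D$, we have $\dd(D,D')\le\dd(D,D'')$ for every $D''\in\U$. Substituting $\dd(D,X)=\delta(D)+\dist(D,X)$ and cancelling the common additive term $\delta(D)$, this becomes
\[
\dist(D,D') \le \dist(D,D'') \qquad \text{for every } D''\in\U ,
\]
i.e., $D'$ realizes the smallest value of $\dist(D,\cdot)$ among all disks currently in $\U$.

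Combining this with the hypothesis $\dist(D,D')>r$ then gives $\dist(D,D'')\ge\dist(D,D')>r$ for every $D''\in\U$, which is exactly the claim. The same reasoning also covers the variant in which the edge weight is the clamped value $\max(\dist,0)$: since $\dist(D,D')>r\ge0$ the clamping is inactive at $D'$, and running the identical chain of inequalities with $\max(\dist(D,\cdot),0)$ in place of $\dist(D,\cdot)$ forces $\max(\dist(D,D''),0)>r\ge0$ for every $D''\in\U$, whence $\dist(D,D'')>r$ as well.

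There is no real obstacle here; the single point to be careful about is to invoke the \emph{global} minimality of $\dd(D,D')$ over all of $\R\times\U$ --- it is this, rather than merely ``$D'$ is a neighbour of $D$'', that legitimizes comparing the fixed disk $D$ against the whole of $\U$. I would also record the immediate consequence that the algorithm actually relies on: since $\U$ only loses disks as Dijkstra's algorithm proceeds, once $D$ has no edge of $G_r$ into the current $\U$ it will never have one into any later (smaller) $\U$ either, which is what justifies moving $D$ permanently into $\K$ and never processing it again.
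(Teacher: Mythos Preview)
Your proof is correct and is essentially the same as the paper's: both use that $(D,D')$ minimizes $\dd(D,\cdot)=\delta(D)+\dist(D,\cdot)$ over $\U$, cancel the common $\delta(D)$, and conclude $\dist(D,D'')\ge\dist(D,D')>r$. Your additional remarks on the clamped weight and on why $D$ can be moved permanently to $\K$ are sound elaborations not present in the paper's terse one-line argument.
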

\begin{proof}
By assumption, and since $(D,D')$ is the closest pair in $\R\times\U$, we have, for each $D''\in\U$,
$
\delta(D) + r < \delta(D) + \dist(D,D') \le
\delta(D) + \dist(D,D'') ,
$
so $\dist(D,D'') > r$, as asserted.
\end{proof}

Note that Lemma \ref{lem6} also applies to weighted unit disk graphs (with weight $\|c_D - c_{D'}\|$ for edge $(D,D')$.)

Assume then that $\dist(D,D') \le r$. We move $D'$ from $\U$ to $\R$, assign to it the distance label 
$
\delta(D') = \delta(D) + \dist(D,D'),
$
and set $\prev(D') = D$, namely $D$ is the disk preceding $D'$ along the shortest path from $D_s$ to $D'$. 
We then repeat the entire step with the new sets $\R$ and $\U$, and continue until $\R$ empties out.
Upon termination, $\U$ is the set of disks that are unreachable from $D_s$ in $G_r$, and $\K$ is the 
set of reachable disks, each with its correct distance label and its predecessor.

The correctness of this procedure is argued exactly as in
\cite{CabelloJ15}, even though the distance function is different. An efficient implementation is obtained by
applying the efficient dynamic bichromatic closest-pair data structure of Kaplan et al.~\cite{KaplanMRSS20}, 
or rather its improved version by Liu~\cite{Liu}, to $\R\times\U$, under the distance
function given in (\ref{eq:w}) and under deletions from $\U$ and insertions into and deletions from $\R$. 
The technique requires the following two properties.

\medskip
\noindent
(i) The Voronoi diagram $\Vor_1(\U)$ of $\U$, under the distance function
$
\dd_1(q,D') = \|q - c_{D'}\| - \rho_{D'} ,
$
for $D'\in\U$, where $c_{D'}$ and $\rho_{D'}$ are the respective center and radius of $D'$, has linear complexity.

\medskip
\noindent
(ii) The Voronoi diagram $\Vor_2(\R)$ of $\R$, under the distance function
$
\dd_2(q,D) = \delta(D) + \|q - c_D\| - \rho_D , 
$
for $D\in\R$, also has linear complexity.

\medskip
Both properties indeed hold, as each diagram is an additively-weighted Voronoi diagram of the set of centers 
of the disks of $\U$ or of $\R$. Both diagrams need to be maintained dynamically, and the techniques of
\cite{KaplanMRSS20,Liu} do that, with a polylogarithmic cost for each update and query operation.
In the more efficient implementation of \cite{Liu}, the amortized cost of an update is $O(\log^4 n)$.
(This holds for deletions; insertions and queries are faster.)

In summary, we have shown:
\begin{theorem} \label{thm:weight1}
The shortest-path tree from some starting disk in the proximity graph of $n$ disks (of
arbitrary radii) in the plane, under the weights of inter-disk distances, can be constructed in $O(n\log^4 n)$ time.
\end{theorem}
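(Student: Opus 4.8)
The plan is to run Dijkstra's algorithm on $G_r$ without ever materializing its (possibly $\Theta(n^2)$) edge set, by maintaining the tripartition $\D = \R \cup \K \cup \U$ described above and, in each \emph{Dijkstra step}, extracting the bichromatic closest pair $(D,D')$ of $\R \times \U$ under the modified distance $\dd(D,D') = \delta(D) + \dist(D,D')$ of (\ref{eq:w}), then testing whether $\dist(D,D') \le r$ and acting as described. The first thing to argue is correctness: since $\delta(\cdot)$ is frozen on $\R \cup \K$ and all edge weights are nonnegative (so $\dd(D,D') \ge \delta(D)$), the smallest value of $\dd$ over $\R \times \U$ is exactly the smallest tentative label $\min_{D' \in \U}\min_{D \in \R}(\delta(D) + \dist(D,D'))$ that textbook Dijkstra would settle next. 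The only deviation from the textbook algorithm is the edge-threshold test, and Lemma~\ref{lem6} shows that when the extracted closest pair fails $\dist(D,D') \le r$, the disk $D$ has no neighbour left in $\U$ at all, so it may be retired into $\K$ permanently without skipping any relaxation. Hence the decomposition invariants of \cite{CabelloJ15} are preserved and their correctness proof carries over essentially verbatim, even though the metric here is different. For the operation count, note that each disk is inserted into $\R$ at most once, moved from $\R$ to $\K$ at most once, and deleted from $\U$ at most once, and that every Dijkstra step performs exactly one of these relocations; hence the number of steps --- and so of updates to, and queries on, the data structure --- is $O(n)$.

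The main technical ingredient, and the place that needs the most care, is a dynamic bichromatic closest-pair structure for $(\R,\U)$ under $\dd$, supporting insertions into and deletions from $\R$ and deletions from $\U$. I would invoke the structure of Kaplan et al.~\cite{KaplanMRSS20}, in the sharpened form of Liu~\cite{Liu}, which is built from two one-sided dynamic nearest-neighbour structures and requires that each of the two associated Voronoi diagrams have linear complexity. These are precisely the diagrams of properties~(i) and~(ii): $\Vor_1(\U)$ is the additively weighted Voronoi diagram of the centres of the disks of $\U$ with weights $-\rho_{D'}$, and $\Vor_2(\R)$ is the one for the centres of the disks of $\R$ with weights $\delta(D)-\rho_D$; both are standard additively weighted Voronoi diagrams, hence of linear size and dynamically maintainable with polylogarithmic cost per operation. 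A point worth checking explicitly is that $\delta(D)$ is fixed at the moment $D$ enters $\R$ and never changes afterwards, so the weight of a site of $\Vor_2(\R)$ is static once inserted, which is exactly what the dynamic structure needs. Combining Liu's $O(\log^4 n)$ amortized bound per deletion (insertions and queries being faster) with the $O(n)$ operations above, and an initialization that inserts the disks one at a time within the same bound, yields the claimed $O(n\log^4 n)$ total running time.
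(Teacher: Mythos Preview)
Your proposal is correct and follows essentially the same approach as the paper: the same $\R\cup\K\cup\U$ tripartition, the same bichromatic closest-pair extraction under $\dd$, the same appeal to Lemma~\ref{lem6} and to \cite{CabelloJ15} for correctness, and the same use of the dynamic BCP structure of \cite{KaplanMRSS20,Liu} via properties~(i) and~(ii). Your explicit remark that $\delta(D)$ is frozen upon insertion into $\R$ (so the site weights in $\Vor_2(\R)$ are static) is a useful clarification that the paper leaves implicit.
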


Next we address the more challenging weighted case when the edge weights are the distances between the centers.

\paragraph{The edge weights are the distances between the centers.}
In this case we need to maintain a dynamic bichromatic closest pair structure under the distance
$
\lambda(D,D') = \delta(D) + \|c_D - c_{D'}\|,
$
for $D\in\R$ and $D'\in \U$. But then Lemma~\ref{lem6} does not hold anymore, because it is then possible that 
the closest pair $(D,D')\in \R\times\U$ satisfies $\dist(D,D') > r$ but there could be other pairs $(D,D'')$ with $\dist(D,D'') \le r$, so $D$ cannot be removed from $\R$ yet.

To overcome this difficulty we obtain a \emph{bichromatic closest pair} (BCP for short) data structure
by applying a black-box reduction of 
Chan~\cite{Chan:bcp}, which simplifies and slightly improves an earlier technique of Eppstein~\cite{Epp},
to two novel nearest-neighbor data structures that we now describe.

Our new data structures find for $D\in \R$ a nearest neighbor $D'\in \U$ according to the 
distance function $\lambda$ but only among disks $D'$ such that $dist(D,D') \le r$, and similarly for $D\in \U$
and neighbors in $\R$.

We then run the Dijkstra steps, as in the previous case, using the modified BCP data structure, as long as there are still neighbors (in $G_r$) in $\R\times \U$.
Upon termination, $\U$ is the set of disks that are unreachable from $D_s$ in $G_r$, and $\R$ is the 
set of reachable disks, each with its correct distance label and its predecessor.

Our nearest-neighbor data structure consists of two balanced search trees $T_\U$ and
$T_\R$. The tree $T_\U$ ($T_\R$ is handled in a similar manner; see below)
stores the disks $D'\in \U$ at its leaves, sorted in increasing order of the values
$\rho_{D'}$. For each node $v$ of $T_\U$, we maintain an additively-weighted Voronoi diagram $\Vor_1(\U_v)$ on the set $\U_v$ of the
disks of $\U$ stored at the leaves of the subtree rooted at $v$, where the weight of a disk $D'$ is $-\rho_{D'}$.
We also maintain a second standard (unweighted) Voronoi diagram $\Vor_2(\U_v)$ for $\U_v$.

\looseness=-1
Querying $T_\U$ with a disk $D\in\R$ is performed as follows.
We find the leftmost leaf $w_0$ of $T_\U$ whose disk $D'_0$ satisfies 
$
\dist(D,D'_0) = \|c_D - c_{D'_0}\| - \rho_D - \rho_{D'_0} \le r .
$
To find $w_0$, we search in $T_\U$ starting at the root. At each node $u$ that we reach, with a left child $v$ and a right child $w$, we search with $D$ in $\Vor_1(\U_v)$, 
and obtain its nearest neighbor $D'$ in the corresponding subset of disks. 
If $\|c_D - c_{D'}\| - \rho_{D'} \le \rho_D + r$, or, equivalently, $\dist(D,D') \le r$, then we continue 
the search at the left child $v$. Otherwise we continue the search with the right child $w$. 
At the root we first 
search in $\Vor_1({\U_{\rm root}} = \U)$. If the nearest neighbor $D'$ satisfies 
$\|c_D - c_{D'}\| - \rho_{D'} > \rho_D + r$, that is, $\dist(D,D') > r$, 
we conclude that $D$ has no neighbor in $\U$.

Let $w_0$ be the leaf
that the search reaches. Note that if a disk $D'$ is
stored at a leaf to the left of $w_0$ then, by construction, 
$\dist(D,D') > r$. That is, only disks stored to the right of $w_0$, including $w_0$, need to be considered.

The search for $w_0$ provides us with a representation of the set $\U_{w_0}^+$ of the disks to the right of $w_0$ as
the union of $O(\log n)$ pairwise disjoint canonical sets, each being the set of disks stored at the root of some subtree of $T_\U$.
We query with $D$ in each of the $O(\log n)$ diagrams $\Vor_2(\U_v)$ that correspond to these subtrees, and return
the disk that is nearest to $D$, i.e., with the minimum distance between their centers, among all the resulting nearest neighbors.
The correctness of this procedure is a consequence of the following lemma.
\begin{lemma} \label{lem:Utree}
In the above procedure, the output disk $D'$ satisfies $\dist(D,D') \le r$.
\end{lemma}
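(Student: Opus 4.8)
The plan is to show that the disk $D'$ returned by the query procedure on $T_\U$ always lies in the set $\U_{w_0}^+$ of disks stored at $w_0$ or to its right, and that every such disk $D''$ satisfies $\dist(D,D'') \le r$. Combining these two facts immediately yields the claim. The first fact is built into the procedure by design: we only query the diagrams $\Vor_2(\U_v)$ over the $O(\log n)$ canonical subtrees whose union is exactly $\U_{w_0}^+$, so the returned $D'$ is one of the disks in these subtrees, hence $D' \in \U_{w_0}^+$.

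The substance of the proof is therefore the second fact, namely that $\dist(D,D'') \le r$ for \emph{every} $D''$ stored at $w_0$ or to its right. Here I would use the crucial monotonicity property coming from how $T_\U$ is organized: the disks are sorted at the leaves in increasing order of their radii $\rho_{D'}$. First I would argue that $w_0$ itself satisfies $\dist(D,D'_0) \le r$; this follows by induction on the search path, using the invariant that whenever the search descends to the left child $v$ of a node $u$, the nearest neighbor $D'$ of $D$ in $\Vor_1(\U_v)$ (under the additively weighted distance with weights $-\rho_{D'}$) satisfies $\dist(D,D') \le r$, and since this is the \emph{nearest} such neighbor in $\U_v$, and the search eventually reaches a leaf of that subtree, continuity/monotonicity of the descent guarantees that the leaf $w_0$ reached carries a disk with $\dist(D,D'_0) = \|c_D-c_{D'_0}\|-\rho_D-\rho_{D'_0}\le r$. (If the search ever went right, the leaf reached is still the leftmost one whose weighted nearest neighbor in the relevant canonical subset is within range, which is exactly the defining property of $w_0$.)

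Next, for an arbitrary disk $D''$ stored at a leaf to the right of $w_0$, I would use the sorting order: $\rho_{D''} \ge \rho_{D'_0}$. Now here is the key point — being to the right of $w_0$ in a diagram where $D'_0$ was selected as the \emph{additively weighted nearest neighbor} of $D$ does not, by itself, control $\|c_D - c_{D''}\|$. So the argument must instead invoke the defining property of $w_0$ as the leftmost leaf whose canonical ancestor's weighted nearest neighbor to $D$ is in range: I would walk up from $D''$ to the canonical subtree root $v$ that was queried during the left-descent test, observe that the weighted nearest neighbor $D^\ast$ of $D$ in $\Vor_1(\U_v)$ satisfies $\|c_D - c_{D^\ast}\| - \rho_{D^\ast} \le \rho_D + r$, and then conclude $\|c_D - c_{D''}\| - \rho_{D''} \le \|c_D - c_{D^\ast}\| - \rho_{D^\ast} \le \rho_D + r$ because $D^\ast$ minimizes precisely the quantity $\|c_D - c_{\cdot}\| - \rho_{\cdot}$ over all disks in $\U_v$, and $D'' \in \U_v$. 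Rearranging gives $\dist(D,D'') \le r$.

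The main obstacle I anticipate is being careful about the case analysis at the nodes where the search descends to the \emph{right} child: in that situation $\U_{w_0}^+$ is assembled from canonical pieces that hang off right children along the search path, and one must verify that each such piece $\U_v$ was at some point certified to contain a disk (its weighted nearest neighbor to $D$) with $\dist \le r$ — or, more precisely, that the sorted-by-radius structure together with the choice of $w_0$ as the \emph{leftmost} qualifying leaf forces every disk weakly to the right of $w_0$ into range. The cleanest way around this is probably to prove directly, without reference to the tree decomposition, the clean monotone statement: if $\dist(D, D'_0) \le r$ and $D'_0$ is the leftmost leaf with this property in the radius-sorted order, then it need \emph{not} be true that all disks to its right qualify — so I would instead re-examine whether the intended invariant is "$w_0$ is the leftmost leaf such that the weighted nearest neighbor of $D$ among all disks weakly to its right is in range," and phrase the proof around that invariant, which is what the search procedure actually maintains. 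Getting this invariant stated correctly, and matching it to the descent rule (go left iff the weighted nearest neighbor in the left subtree is in range), is the delicate part; once it is pinned down, the inequality chain above closes the argument routinely.
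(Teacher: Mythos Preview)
Your proposal has a genuine gap: you are trying to prove that \emph{every} disk $D''\in\U_{w_0}^+$ satisfies $\dist(D,D'')\le r$, but this is simply false, and your own closing paragraph already suspects as much. A disk to the right of $w_0$ has larger radius than $D'_0$, but its center can be arbitrarily far from $c_D$, so its $\dist$ to $D$ can be huge. Concretely, the inequality chain you write is backwards: if $D^\ast$ is the weighted nearest neighbor of $D$ in $\Vor_1(\U_v)$, then $D^\ast$ \emph{minimizes} $\|c_D-c_{\cdot}\|-\rho_{\cdot}$ over $\U_v$, so for $D''\in\U_v$ you get $\|c_D-c_{D^\ast}\|-\rho_{D^\ast}\le \|c_D-c_{D''}\|-\rho_{D''}$, not the reverse. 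Thus your argument does not bound $\dist(D,D'')$ from above.

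The idea you are missing is that the lemma is only about the \emph{output} disk $D'$, which is not an arbitrary member of $\U_{w_0}^+$ but the one with \emph{minimum inter-center distance} to $D$ (that is precisely what the diagrams $\Vor_2$ compute). In particular $\|c_D-c_{D'}\|\le \|c_D-c_{D'_0}\|$, since $D'_0\in\U_{w_0}^+$ as well. Combine this with the sorting order $\rho_{D'}\ge\rho_{D'_0}$ and with $\dist(D,D'_0)\le r$ (the defining property of $w_0$), and you get
\[
\dist(D,D') \;=\; \|c_D-c_{D'}\|-\rho_D-\rho_{D'} \;\le\; \|c_D-c_{D'_0}\|-\rho_D-\rho_{D'_0} \;=\; \dist(D,D'_0) \;\le\; r,
\]
which is the entire proof. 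No tree-walk or canonical-subtree bookkeeping is needed.
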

\begin{proof}
Let $D'_0$ be the disk at the leaf $w_0$. By construction, we have 
\begin{equation} \label{eq:1}
\rho_{D'_0} \le \rho_{D'} .
\end{equation}
By construction, $D'$ is the disk in $\U_{w_0}^+$ nearest to $D$ in the inter-center distance. That is,
\begin{equation} \label{eq:3}
\|c_D - c_{D'}\| \le \|c_D - c_{D'_0}\| .
\end{equation}
Assume by contradiction that $\dist(D,D') > r$. Then, by construction,
\[
\|c_D - c_{D'_0}\| - \rho_{D'_0} - \rho_D =
\dist(D,D'_0) \le r < \dist(D,D') =
\|c_D - c_{D'}\| - \rho_{D'} - \rho_D .
\]
That is,
\begin{equation} \label{eq:2}
\|c_D - c_{D'_0}\| - \rho_{D'_0} < \|c_D - c_{D'}\| - \rho_{D'} .
\end{equation}
Adding (\ref{eq:1}) and (\ref{eq:2}), we get a contradiction to (\ref{eq:3}). This establishes the lemma.
\end{proof}

The tree $T_\R$ is defined in an analogous manner for the disks of $\R$, except that
(a) the leaves are sorted in increasing order of $\delta(D) + \rho_{D}$, and
(b) the Voronoi diagram $\Vor_2(\R_v)$ at a node $v$ of $T_\R$, where $\R_v$ is the set of disks stored 
at the root of the subtree rooted at $v$, is the additively-weighted diagram on $\R_v$,
where the additive weight of a disk $D$ is $\delta(D)$.

The first kind of Voronoi diagrams $\Vor_1(\R_v)$ are defined exactly as in the case of $T_\U$, with the additive weight being $-\rho_D$.

Here too, when we query $T_\R$ with a disk $D'$ of $\U$,
we search in the tree to find the leftmost leaf $w_0$ of $T_\R$ whose disk $D_0$ satisfies 
$
\dist(D_0,D') = \|c_{D_0} - c_{D'}\| - \rho_{D_0} - \rho_{D'} \le r .
$
This is performed as follows. We first search for the leftmost leaf $w_0$ whose associated disk $D_0$ 
satisfies $\dist(D_0,D') \le r$, using the same technique as in the previous search in $T_\U$, 
in which we query various Voronoi diagrams $\Vor_1(\R_v)$ along the search path to $w_0$.
We then obtain the set $\R_{w_0}^+$ of all disks stored at the leaves to the right of $w_0$, including $w_0$, as the disjoint union of $O(\log n)$ subtrees. 
We query each of the diagrams $\Vor_2(\R_v)$ associated with these subtrees, and return the disk $D$ that is the nearest neighbor to $D'$ over all these diagrams.

The correctness of this procedure is a consequence of the following `sister' lemma to Lemma~\ref{lem:Utree}.
\begin{lemma} \label{lem:Rtree}
In the above procedure, the output disk $D$ satisfies $\dist(D,D') \le r$.
\end{lemma}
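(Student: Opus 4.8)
The plan is to mirror the proof of Lemma~\ref{lem:Utree}, swapping the roles of the two disks and of the two sorting keys. Recall that in $T_\R$ the leaves are sorted by the values $\delta(D)+\rho_D$, and that the second-level diagrams $\Vor_2(\R_v)$ are additively weighted with weight $\delta(D)$, so querying them with $D'$ returns the disk $D$ minimizing $\delta(D)+\|c_{D_0}-c_{D'}\|$ over the canonical subsets; but the key point we actually need for \emph{this} lemma is only the structural property of $w_0$, namely that the disks stored to the left of $w_0$ all violate $\dist(\cdot,D')\le r$, and that the returned $D$ lies in $\R_{w_0}^+$, together with the fact that $D$ minimizes the inter-center distance to $D'$ among the disks of $\R_{w_0}^+$ --- wait, that is not quite what $\Vor_2$ gives here.

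So the first thing I would do is separate the two concerns. For the \emph{correctness of the distance constraint} (this lemma), the relevant search is the first search, the one using the $\Vor_1(\R_v)$ diagrams to locate $w_0$; and the relevant second-level query must be the one returning the disk minimizing $\|c_D-c_{D'}\|$ (the analogue of inequality~(\ref{eq:3})). Concretely, I expect the intended statement to be that the second-level pass for the $T_\R$ query, when used to certify membership in $G_r$, queries the \emph{unweighted} Voronoi diagrams of the canonical subsets (exactly as in the $T_\U$ case), returning the disk $D$ nearest to $D'$ in inter-center distance; the additively weighted $\Vor_2(\R_v)$ with weight $\delta(D)$ is what the BCP reduction of Chan uses to extract the actual closest pair under $\lambda$, a separate matter. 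I would first make this reading explicit (or cite the symmetry with the $T_\U$ description verbatim), and then the proof is a one-line transcription.

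Given that setup, the proof runs as follows. Let $D_0$ be the disk at the leaf $w_0$. By the choice of the key in $T_\R$, since $D_0$ is to the left of (or equal to) $D$ in the left-to-right order, we have $\delta(D_0)+\rho_{D_0}\le\delta(D)+\rho_D$, hence in particular $\rho_{D_0}\le\rho_D$ once the $\delta$ terms are accounted for --- actually the clean inequality we need is $\rho_{D_0}\le\rho_D$, which I would derive exactly as inequality~(\ref{eq:1}) was derived, i.e.\ directly from the sorting key of $T_\R$ restricted to the comparison of $D_0$ and $D$. By construction of the second-level query, $D$ is the disk of $\R_{w_0}^+$ nearest to $D'$ in inter-center distance, so $\|c_D-c_{D'}\|\le\|c_{D_0}-c_{D'}\|$. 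Suppose, for contradiction, that $\dist(D,D')>r$. Since $w_0$ was found so that $\dist(D_0,D')\le r$, we get
\[
\|c_{D_0}-c_{D'}\|-\rho_{D_0}-\rho_{D'}=\dist(D_0,D')\le r<\dist(D,D')=\|c_D-c_{D'}\|-\rho_D-\rho_{D'},
\]
hence $\|c_{D_0}-c_{D'}\|-\rho_{D_0}<\|c_D-c_{D'}\|-\rho_D$. Adding $\rho_{D_0}\le\rho_D$ to this, we obtain $\|c_{D_0}-c_{D'}\|<\|c_D-c_{D'}\|$, contradicting $\|c_D-c_{D'}\|\le\|c_{D_0}-c_{D'}\|$. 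Therefore $\dist(D,D')\le r$, proving the lemma.

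The main obstacle I anticipate is purely bookkeeping rather than mathematical: making sure that the description of the $T_\R$ query as written in the excerpt is read with the correct second-level diagram, because the text reuses the name $\Vor_2(\R_v)$ both for the additively weighted diagram (weight $\delta(D)$) and --- implicitly, by the phrase ``same technique as in the previous search in $T_\U$'' --- for the inter-center-distance query. If the lemma is really meant to be about the additively weighted $\Vor_2$, then the inequality analogous to~(\ref{eq:3}) must instead be $\delta(D)+\|c_D-c_{D'}\|\le\delta(D_0)+\|c_{D_0}-c_{D'}\|$ and one must check that the $\delta$ terms cancel against the sorting-key inequality $\delta(D_0)+\rho_{D_0}\le\delta(D)+\rho_D$; this still works, since adding the two gives $\rho_{D_0}+\|c_D-c_{D'}\|\le\rho_D+\|c_{D_0}-c_{D'}\|$, which combined with $\dist(D_0,D')\le r<\dist(D,D')$ again yields a contradiction. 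Either way the algebra closes; the only real care needed is to state which quantity the returned disk $D$ minimizes, and I would resolve this by explicitly asserting, at the start of the proof, the two inequalities ($\rho_{D_0}\le\rho_D$ from the key, and the minimality inequality from the second-level query) and then running the identical contradiction argument as in Lemma~\ref{lem:Utree}.
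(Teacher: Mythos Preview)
Your second variant is the correct one and matches the paper's proof almost verbatim: the paper uses the sorting-key inequality $\delta(D_0)+\rho_{D_0}\le\delta(D)+\rho_D$ together with the $\delta$-weighted minimality $\delta(D)+\|c_D-c_{D'}\|\le\delta(D_0)+\|c_{D_0}-c_{D'}\|$, and adds the former to the strict inequality $\|c_{D_0}-c_{D'}\|-\rho_{D_0}<\|c_D-c_{D'}\|-\rho_D$ (from $\dist(D_0,D')\le r<\dist(D,D')$) to contradict the latter. So that part of your proposal is exactly right.

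Your first variant, however, is not valid. The sorting key in $T_\R$ is $\delta(D)+\rho_D$, not $\rho_D$, so from $D_0$ lying weakly left of $D$ you only get $\delta(D_0)+\rho_{D_0}\le\delta(D)+\rho_D$; you cannot conclude $\rho_{D_0}\le\rho_D$ ``once the $\delta$ terms are accounted for'' --- there is nothing to account them against in that argument, and the analogy with inequality~(\ref{eq:1}) breaks precisely because $T_\U$ was sorted by $\rho$ alone. Likewise, your reading that the second-level pass might query an unweighted diagram is a misreading: the text explicitly defines $\Vor_2(\R_v)$ as the additively weighted diagram with weight $\delta(D)$, and it is this diagram that is queried, so the only minimality inequality available is the $\delta$-weighted one. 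In short, drop the first argument entirely and present only the second; that is the paper's proof.
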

\begin{proof}
Let $D_0$ be the disk at the leaf $w_0$. By construction, we have 
\begin{equation} \label{eq:R1}
\delta(D_0) + \rho_{D_0} \le \delta(D) + \rho_{D} .
\end{equation}
By construction, $D$ is the disk in $\U_{w_0}^+$ nearest to $D'$ in the inter-center distance with the additive weight $\delta$. That is,
\begin{equation} \label{eq:R3}
\delta(D) + \|c_D - c_{D'}\| \le \delta(D_0) + \|c_{D_0} - c_{D'}\| .
\end{equation}
Assume by contradiction that $\dist(D,D') > r$. Then, by construction,
\[
\|c_{D_0} - c_{D'}\| - \rho_{D_0} - \rho_{D'} =
\dist(D_0,D') \le r < \dist(D,D') =
\|c_D - c_{D'}\| - \rho_{D} - \rho_{D'} .
\]
That is,
\begin{equation} \label{eq:R2}
\|c_{D_0} - c_{D'}\| - \rho_{D_0} < \|c_D - c_{D'}\| - \rho_{D} .
\end{equation}
Adding (\ref{eq:R1}) and (\ref{eq:R2}), we get 
\[
\delta(D_0) + \|c_{D_0} - c_{D'}\| < \delta(D) + \|c_D - c_{D'}\| ,
\]
which is a contradiction to (\ref{eq:R3}). This establishes the lemma.
\end{proof}

We make these nearest-neighbor data structures dynamic by maintaining the various Voronoi diagrams 
$\Vor_1(\U_v)$, $\Vor_2(\U_v)$, $\Vor_1(\R_v)$, $\Vor_2(\R_v)$, dynamically, using the technique of 
\cite{KaplanMRSS20,Liu}, in which the maximum update time of each diagram is $O(\log^4 n)$.
An update of either of the trees $T_\U$, $T_\R$ requires updating $O(\log n)$ data structures along 
the path to the leaf containing the inserted or deleted element, and therefore takes $O(\log^5 n)$ time.
The transformation of Chan\cite{Chan:bcp} from the two
nearest-neighbor structures to a BCP data structure (see also the earlier work of Eppstein~\cite{Epp}) incurs an additional logarithmic overhead. Overall we get

\begin{theorem} \label{thm:weight2}
The shortest-path tree from some starting disk in the proximity graph of $n$ disks (of arbitrary radii) 
in the plane, under the weights of inter-center distances, can be constructed in $O(n\log^6 n)$ time.
\end{theorem}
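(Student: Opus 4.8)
The plan is to follow the same Dijkstra-with-dynamic-BCP skeleton that was used for Theorem~\ref{thm:weight1}, but replace the single dynamic additively-weighted Voronoi diagram by the two-layered nearest-neighbor structures $T_\U$ and $T_\R$ developed just above, whose correctness is guaranteed by Lemmas~\ref{lem:Utree} and~\ref{lem:Rtree}. First I would recall that a Dijkstra step needs, at each iteration, the pair $(D,D')\in\R\times\U$ minimizing $\lambda(D,D')=\delta(D)+\|c_D-c_{D'}\|$ \emph{subject to} $\dist(D,D')\le r$; Lemmas~\ref{lem:Utree} and~\ref{lem:Rtree} say exactly that querying $T_\U$ (resp.\ $T_\R$) returns a valid such neighbor, so feeding these two constrained nearest-neighbor structures into the black-box BCP reduction of Chan~\cite{Chan:bcp} (cf.\ Eppstein~\cite{Epp}) yields a dynamic structure that reports the required constrained bichromatic closest pair, supporting insertions into and deletions from $\R$ and deletions from $\U$.

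Next I would account for the cost of a single update to one of the trees. Inserting or deleting a disk in $T_\U$ (or $T_\R$) touches $O(\log n)$ nodes on the root-to-leaf path, and at each such node we must update the two Voronoi diagrams $\Vor_1(\cdot)$ and $\Vor_2(\cdot)$ stored there. Using the dynamic additively-weighted nearest-neighbor structure of Kaplan et al.~\cite{KaplanMRSS20}, in the improved form of Liu~\cite{Liu}, each such diagram update costs $O(\log^4 n)$ (worst case for deletions, faster for insertions and queries), so one tree update costs $O(\log^5 n)$. The BCP reduction of Chan adds one more logarithmic factor, bringing the amortized cost of a Dijkstra step to $O(\log^6 n)$. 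Since Dijkstra performs $O(n)$ steps in total — each step either finalizes a disk of $\U$ (at most $n$ times) or moves a disk from $\R$ to $\K$ (at most $n$ times, and such a disk is never reprocessed, exactly as in Lemma~\ref{lem6}) — the whole shortest-path-tree construction runs in $O(n\log^6 n)$ time, which is the claimed bound.

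I would then close the loop on correctness: the outer Dijkstra logic is identical to the case of Theorem~\ref{thm:weight1}, so its correctness argument carries over verbatim from Cabello and Jej\v{c}i\v{c}~\cite{CabelloJ15}; the only new ingredient is that the closest pair is now taken over the $r$-restricted bichromatic set, and the two lemmas above certify that the pair we extract is genuinely an edge of $G_r$ with the minimum modified distance. One must also note that when the constrained BCP structure reports that $\R\times\U$ has no admissible pair, we may safely terminate, since then no disk of $\R$ has an outgoing $G_r$-edge into $\U$ and (by the Dijkstra invariant) the remaining disks of $\U$ are unreachable; this is where having the constrained structure, rather than an unconstrained one plus a separate feasibility test, is essential, because Lemma~\ref{lem6} fails for the inter-center weight function.

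The main obstacle, and the place where care is genuinely needed, is the bookkeeping in the two-layer structure: one has to be sure that the Voronoi diagrams $\Vor_1$, used only to \emph{steer} the binary search toward the threshold leaf $w_0$, and the diagrams $\Vor_2$, used to \emph{answer} the nearest-neighbor query over the $O(\log n)$ canonical subtrees to the right of $w_0$, interact correctly under updates — in particular that $\delta(D)$ changes only when a disk is (re)inserted into $\R$, so the sorting key $\delta(D)+\rho_D$ of $T_\R$ and the additive weights of its $\Vor_2$ diagrams are always consistent with the current labels. Once this invariant is maintained, the amortized analysis is routine and the $O(n\log^6 n)$ bound follows.
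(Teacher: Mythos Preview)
Your proposal is correct and follows essentially the same approach as the paper: the two-layer trees $T_\U$, $T_\R$ with Lemmas~\ref{lem:Utree} and~\ref{lem:Rtree}, fed through Chan's BCP reduction, with the same $O(\log^5 n)$-per-tree-update and extra-log accounting to reach $O(n\log^6 n)$. One small slip: once the BCP is constrained to pairs with $\dist(D,D')\le r$, there is no longer any ``move $D$ from $\R$ to $\K$'' step (indeed the paper drops $\K$ entirely here), so every Dijkstra step simply finalizes a disk of $\U$; your own later remark that Lemma~\ref{lem6} fails for this weight function already implies this.
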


\medskip
\noindent{\bf Remark.}
As in the unweighted version, in the case of unit disks, the decision procedure itself can be implemented 
to run faster than $O(n\log^6n)$. For example, Wang and Xue~\cite{WangX20} present an algorithm that runs 
in $O(n \log^2 n)$ time, but this algorithm is not suitable for our optimization procedure, since its comparisons 
generate critical values that are determined by more than two disks. We thus replace its ``problematic'' 
components by new ones, to obtain a (somewhat simpler) algorithm that is suitable for the optimization procedure; see Section~\ref{sec:udg_dijk}.

\subsection{The optimization procedure}
\label{sec:w_opt}

The most natural reverse shortest path question is to find the smallest value of $r$ such that the length 
of the shortest path in $G_r$ from $D_s$ to $D_t$ is smaller than some given real parameter $w$.

This optimization procedure is implemented as in the unweighted case, using the same combination of the 
interval shrinking and bifurcation procedures. The critical values are the same as in the case for the unweighted proximity graph. 

In summary, we have shown:
\begin{theorem} \label{thm:gen-uweighted2}
The reverse shortest path problem for weighted intersection or proximity graphs of arbitrary disks
in the plane can be solved in $O^*(n^{5/4})$ randomized expected time. This bound applies to all the variants
of the problem listed above. In the case of unit disks, where all radii are equal, the problem can be solved in $O^*(n^{6/5})$ randomized expected time. 
\end{theorem}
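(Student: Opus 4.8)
The plan is to run the decision procedures of Theorems~\ref{thm:weight1} and~\ref{thm:weight2} inside the interval-shrinking and bifurcation framework of Section~\ref{sec:unweighted}, verbatim except for the new decision cost. The key preliminary observation is that, even though the decision procedure now runs Dijkstra's algorithm rather than BFS, the parameter $r$ enters its execution only through comparisons of the form ``$\dist(D,D') \le r$?'': the tests deciding whether $(D,D')$ is an edge of $G_r$, whether a disk should pass from $\R$ to $\K$, and (in the inter-center weighted case of Theorem~\ref{thm:weight2}) the searches in the trees $T_\U$ and $T_\R$ that locate the leftmost qualifying leaf $w_0$, which by construction are decided by comparing $\dist(D,D')$ to $r$. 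Every other comparison — between the distance labels $\dd(D,D')$ or $\lambda(D,D')$, and those performed inside the Voronoi diagrams $\Vor_1(\cdot),\Vor_2(\cdot)$, which do not depend on $r$ — is resolved unambiguously during the simulation. Hence the critical values of $r$ are exactly the quantities $\dist(D,D') = \|c_D-c_{D'}\| - \rho_D - \rho_{D'}$ over pairs of disks, precisely the set that arose for the unweighted proximity graph in Section~\ref{sec:unweighted}. (The shortest-path length in $G_r$ is monotone non-increasing in $r$, so $r^*$ is well defined and the bifurcation machinery applies.)

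Given this, the interval-shrinking step is the one from the unweighted proximity case: for a parameter $L$ it produces an interval $I_0$ containing $r^*$ and at most $L$ critical values, by turning the count of critical values in a subinterval into a symmetric batched range-searching problem in $\reals^3$ with $n$ points and $n$ truncated conical-shell ranges, solvable in $O^*(n^{3/2})$ time, and hence interval shrinking runs in $O^*(n^{3/2}/L^{1/2})$ randomized expected time. The bifurcation step then simulates the Dijkstra-based decision procedure at the unknown $r^*$, bifurcating only on those comparisons ``$\dist(D,D') \le r$?'' whose critical value lies in the current interval and resolving all others, and periodically collapsing to the surviving leaf by a binary search guided by the unsimulated decision procedure at cost $O(D(n)\log n)$ per phase. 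With $D(n) = O(n\,\polylog n)$ from Theorems~\ref{thm:weight1} and~\ref{thm:weight2}, the analysis of Section~\ref{sec:unweighted} gives a bifurcation cost of $O^*(L^{1/2}D(n)) = O^*(L^{1/2}n)$; balancing $n^{3/2}/L^{1/2} + L^{1/2}n$ with $L = n^{1/2}$ yields $O^*(n^{5/4})$.

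The point I expect to require the most care is restoring the execution state when the bifurcation tree is traversed depth-first and we retreat from a node to its parent, since part of that state is the collection of dynamic data structures maintained by the decision procedure. For the inter-disk weights of Theorem~\ref{thm:weight1} this is handled exactly as in the unweighted case: keep an undo log for the dynamic additively-weighted diagrams $\Vor_1(\U)$ and $\Vor_2(\R)$ and for the partition $\R,\K,\U$, and replay it in reverse, each deletion undone by the matching reinsertion. For the inter-center weights of Theorem~\ref{thm:weight2} the state is heavier — the trees $T_\U$ and $T_\R$, each node carrying its own pair of dynamic Voronoi diagrams, wrapped inside Chan's black-box reduction~\cite{Chan:bcp} from nearest-neighbor structures to a bichromatic closest pair structure. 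The plan is to log every update performed along each tree path at every level of this hierarchy and undo them in reverse order on retreat, exactly as in Section~\ref{sec:unweighted}; since each such update is reversible and the total number along a tree path stays within the same $O^*(\cdot)$ budget, the asymptotics are unaffected.

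Finally, for unit disk graphs every critical value $\dist(D,D') = \|c_D-c_{D'}\| - 2$ is a shifted inter-point distance, so the interval-shrinking step can instead use ordinary planar distance selection, running in $O^*(n^{4/3}/L^{1/3})$ time as in~\cite{BFKKS}; balancing $n^{4/3}/L^{1/3} + L^{1/2}n$ with $L = n^{2/5}$ gives $O^*(n^{6/5})$. If one prefers to avoid the complex dynamic structures and instead plug in a faster grid-based implementation of Dijkstra's algorithm such as that of Wang and Xue~\cite{WangX20}, one must first replace its comparisons whose critical values are determined by more than two disks; this is carried out in Section~\ref{sec:udg_dijk}, and the resulting nearly-linear decision procedure feeds into the same optimization scheme, again yielding $O^*(n^{6/5})$.
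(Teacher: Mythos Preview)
Your proposal is correct and follows essentially the same approach as the paper, which simply remarks that the optimization procedure ``is implemented as in the unweighted case, using the same combination of the interval shrinking and bifurcation procedures,'' with the critical values being ``the same as in the case for the unweighted proximity graph.'' You supply considerably more detail than the paper does---in particular, your explicit verification that $r$ enters the Dijkstra-based decision procedures only through comparisons of the form $\dist(D,D')\le r$, and your discussion of state restoration for the heavier data structures of Theorem~\ref{thm:weight2}---but the argument and the parameter balancing are identical.
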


\medskip
\noindent{\bf Remark.}
The machinery developed in this section seems to be more broadly applicable to other optimization questions 
that involve shortest paths in weighted proximity or intersection graphs, of disks or of more general 
geometric objects. Simple extensions could involve different definitions of proximity or 
the use of other weight functions.\footnote{%
  For the decision procedure to retain near-linear time, we need to ensure that the resulting (generalized) Voronoi diagram continues to have linear complexity, which does hold in many, but not all, applications.}
Other extensions could involve different problems, such as computing 
all-pairs shortest paths, computing shortest paths with negative edge weights, computing the diameter of the graph, etc.

\section{BFS and Dijkstra's algorithm in unit disk graphs}
\label{sec:udg_bfs_dijk}

One notable drawback of the decision procedures presented in Sections~\ref{sec:unweighted} and~\ref{sec:weighted}, especially from a practical point of view, is that they use a data structure for maintaining additively-weighted Voronoi diagrams under insertions and deletions, which is a rather complex task. Moreover, this also requires some additional care in the implementation of the bifurcation procedure, as discusses earlier. It is therefore desirable to avoid using this data structure, and dynamic data structures in general.

The goal of this section is to present such decision procedures for the special and important case of unit disks. That is, we wish to develop efficient and practical algorithms for BFS and Dijkstra's algorithm in unit disk graphs. This is a well-studied topic and such algorithms already exist, as surveyed in the introduction. However, they are not suitable for our purpose, since the efficiency of the bifurcation procedure, which simulates the execution of the decision
procedure at $r^*$, depends on the assumption that whenever it bifurcates on a concrete
value $r$, this value is a critical value that is determined by just a pair of points, like the inter-point distance between them. We thus
develop new decision algorithms, which essentially replace
some components of the known algorithms with new (non-trivial) ones, so that the
required assumption holds. We believe that the new algorithms, especially the one for the weighted version, are also somewhat simpler than the original ones on which they are based.

We stress that in this section we only consider the decision problems. The corresponding optimization procedures
are implemented more or less as before, with some obvious modifications, except for one technical issue that concerns
the grid construction, a key ingredient of the technique presented here. This issue is elaborated at the end of the section.

In this section $P$ is a set of $n$ points in the plane, $s$ and $t$ are two designated points in $P$, and $G_r$ is the graph $(P,E_r)$, where the edges in $E_r$ are all the pairs $(p,q)$ such that $\|p-q\| \le r$ (it is the intersection graph of congruent disks of radius $r/2$ centered at the points of $P$). In the weighted version, the weight of an edge $(p,q) \in E_r$ is $\|p-q\|$.

\subsection{Unweighted unit disks}
\label{sec:udg_bsf}

The simplest way of implementing the decision procedure is to run a BFS on $G_r$
from $s$, and check that we reach $t$ after at most $k$ levels. However, since $G_r$
may contain quadratically many edges, some more careful implementation of the BFS is needed.
Our algorithm is similar in spirit to that of Chan and Skrepetos~\cite{CS}, but differs in a few key details that 
make the optimization algorithm more efficient. It is also similar to the algorithm of 
Cabello and Jej\v{c}i\v{c}~\cite{CabelloJ15}, except that they use the Delaunay triangulation of $P$ instead of a grid. 

We construct a grid of cell size $r/\sqrt{2}$ and distribute the points of $P$ among its cells.\footnote{%
  A slightly different implementation is used in the optimization procedure; see a comment at the end of the section.}
This takes $O(n)$ time. For each grid cell $\tau$ denote by $P_\tau$ the set of points of $P$ 
in $\tau$. We perform a BFS from $s$ in $G_r$ as follows. Let $P^i$ be the set of points at 
level $i$ of the BFS. We set $P^0=\{s\}$. Given $P^i$ we compute $P^{i+1}$ as follows. 
For each grid cell $\tau$ such that $P^i_\tau := P^i\cap P_\tau \not=\emptyset$ we compute 
the (standard, static) Voronoi diagram $\Vor(P^i_\tau)$. Define the \emph{neighborhood} $\NN(\tau)$ of $\tau$ 
to consist of all the grid cells whose row and column indices are at most $2$ apart from the 
respective row and column indices of $\tau$, excluding $\tau$. (All the cells that contain 
points at distance $\le r$ from $\tau$, other than those in $\tau$ itself, belong to $\NN(\tau)$.) 
Then for each grid cell $\tau'$ in $\NN(\tau)$ we locate each $p\in P_{\tau'}$ in 
$\Vor(P^i_\tau)$, thereby obtaining the nearest neighbor of $p$ in $P^i_\tau$. (Actually, 
it suffices to locate only those points of $P_{\tau'}$ that have not yet been reached 
by the BFS, but this does not affect the asymptotic complexity of the procedure.)
If the distance between $p$ and its nearest neighbor is $\le r$ and $p$ has not yet been 
reached by the BFS, we add $p$ to $P^{i+1}$. Note that there is no need to
carry out this step for $\tau' = \tau$ (which is excluded from $\NN(\tau)$ anyway), 
since all the points of $P_\tau$ are at distance at most $r$ from any point of $P^i_\tau$, so all the points of 
$P_\tau\setminus P^i_\tau$ are immediately added to $P^{i+1}$.

Once we obtain $P^k$ we stop the BFS and answer positively if $t$ has been reached by the BFS, and negatively if it has not. (We can stop earlier once $t$ has been reached.)

This decision procedure takes $D(n) := O(n\log n)$ time.\footnote{%
  Chan and Skrepetos's procedure runs in linear time, after pre-sorting the points by their
  $x$- and $y$-coordinates, but (a) Chan and Skrepetos's procedure is not suitable for the optimization procedure,
  for the reasons discussed above, and (b) such an improvement has negligible effect on the performance
  of our optimization algorithm anyway.}
To see this, put $n_\tau := |P_\tau|$ and $n^i_\tau := |P^i_\tau|$, for each $\tau$ and $i$. 
The cost of processing the $i$-th step of the BFS (moving from $P^i$ to $P^{i+1}$) is
\begin{equation} \label{eq:icost}
O\Bigl( \sum_{\tau \,\mid\, n^i_\tau \ne 0} 
\Bigl( \Bigl( n^i_\tau + \sum_{\tau'\in \NN(\tau)} n_{\tau'} \Bigr) \log n \Bigr) \Bigr),
\end{equation}
and the overall cost is the sum of these bounds over $i$. The asserted bound on the cost
is a consequence of the following two observations:
(i) Each cell $\tau$ can have $n^i_\tau \ne 0$ for at most two consecutive values of $i$,
the $i$ at which the first point of $P_\tau$ is reached by the BFS 
(and added to $P^{i+1}$) and the next index $i+1$.
(ii) Each cell $\tau'$ is a neighbor of only $O(1)$ cells $\tau$, and can therefore
participate in the inner sum in (\ref{eq:icost}) for at most two indices $i$ for each neighbor cell
$\tau$, that is, for a total of at most $O(1)$ indices.

We have thus shown:
\begin{theorem}
	Given $P$, $s$, $t$, $k$ and $r$ as above, we can determine whether the shortest path from 
	$s$ to $t$ in $G_r$ is of length at most $k$, in $O(n\log n)$ time.
\end{theorem}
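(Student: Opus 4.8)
The plan is to establish correctness of the grid-based BFS first, and then verify the claimed running-time bound using the two combinatorial observations already stated in the excerpt. For correctness, I would argue that the set $P^{i+1}$ computed by the algorithm is exactly the set of points of $P$ whose BFS level is $i+1$. The key geometric fact is that a grid of cell size $r/\sqrt 2$ has the property that any two points in the same cell are at distance at most $r$ (the diagonal), and any point at distance at most $r$ from a point of cell $\tau$ lies either in $\tau$ or in a cell of $\Neighborhood(\tau)$ (since $r = \sqrt 2 \cdot (r/\sqrt 2)$ spans at most two cells in each axis direction). Hence every edge of $G_r$ incident to a point of $P^i_\tau$ goes to a point of $P_\tau$ or of some $P_{\tau'}$ with $\tau'\in\Neighborhood(\tau)$. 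For the within-cell case, every not-yet-reached point of $P_\tau$ is automatically within distance $r$ of any point of $P^i_\tau\ne\emptyset$, so it is correctly added to $P^{i+1}$. For the cross-cell case, a point $p\in P_{\tau'}$ has a $G_r$-neighbor in $P^i_\tau$ if and only if its nearest neighbor in $P^i_\tau$ (found by point location in $\Vor(P^i_\tau)$) is at distance at most $r$; this is exactly the test the algorithm performs. A straightforward induction on $i$ then shows $P^i$ is the $i$-th BFS level, so stopping after computing $P^k$ (or earlier, once $t$ is reached) correctly decides whether $\dist_{G_r}(s,t)\le k$.

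For the running time, I would first account for the $O(n)$ preprocessing: building the grid and bucketing the points of $P$ into cells (using, e.g., a hash table on the integer cell coordinates, or, if one prefers a pointer-machine bound, an $O(n\log n)$ sort). Then the cost of the $i$-th BFS step is bounded by (\ref{eq:icost}): for each cell $\tau$ with $n^i_\tau\ne 0$ we build $\Vor(P^i_\tau)$ in $O(n^i_\tau\log n^i_\tau)$ time, and for each neighbor cell $\tau'\in\Neighborhood(\tau)$ we perform $n_{\tau'}$ point-location queries, each in $O(\log n^i_\tau)$ time after an $O(n^i_\tau)$-size point-location structure on the diagram is built (also within the stated bound). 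Summing (\ref{eq:icost}) over all $i$, I invoke observation (i): a fixed cell $\tau$ contributes a nonzero $n^i_\tau$ for at most two consecutive values of $i$, so $\sum_i n^i_\tau = O(n_\tau)$ and hence $\sum_i\sum_{\tau:\,n^i_\tau\ne 0} n^i_\tau\log n = O\bigl(\log n\sum_\tau n_\tau\bigr) = O(n\log n)$. For the neighbor term, I invoke observation (ii): each cell $\tau'$ lies in $\Neighborhood(\tau)$ for only $O(1)$ cells $\tau$ (the neighborhood relation has bounded degree, $O(1)$ by the fixed $5\times 5$ window), and for each such $\tau$ it is charged for at most two values of $i$; hence $\tau'$ contributes $n_{\tau'}\log n$ a total of $O(1)$ times, giving $\sum_i\sum_\tau\sum_{\tau'\in\Neighborhood(\tau)} n_{\tau'}\log n = O\bigl(\log n\sum_{\tau'} n_{\tau'}\bigr) = O(n\log n)$. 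Adding the two terms yields the overall $O(n\log n)$ bound.

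The only mildly delicate point — more a matter of care than a genuine obstacle — is observation (i): one must be sure that once the first point of $P_\tau$ is reached by the BFS at some level $i_0$, all remaining points of $P_\tau$ are added at level $i_0+1$, so that $n^i_\tau\ne 0$ only for $i\in\{i_0,i_0+1\}$. This is immediate from the diagonal bound $r/\sqrt 2\cdot\sqrt 2 = r$: any point of $P^{i_0}_\tau$ is within distance $r$ of every other point of $P_\tau$, so the within-cell step flushes all of $P_\tau$ into $P^{i_0+1}$. Given this, the charging arguments are routine. I would therefore organize the write-up as: (1) the two grid properties (diagonal $\le r$, reach $\le 2$ cells); (2) correctness by induction on $i$, with the within-cell and cross-cell cases handled by Lemma-style sub-claims; (3) the running-time sum via observations (i) and (ii); concluding with the stated $O(n\log n)$ bound, which matches the claimed $D(n)$.
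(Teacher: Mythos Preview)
Your proposal is correct and follows essentially the same approach as the paper: the grid-based BFS with static Voronoi diagrams per level and cell, with the running time bounded via the same cost expression~(\ref{eq:icost}) and the same two observations (i) and (ii). You spell out the correctness argument (the diagonal and reach properties of the grid, and the induction on $i$) more explicitly than the paper does, but the algorithm and its analysis are identical.
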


\subsection{Weighted unit disks}
\label{sec:udg_dijk}

Let $P$, $s$, $t$, $r$ and $w$ be as above. The goal is to determine whether the length of the 
weighted shortest path in $G_r$ from $s$ to $t$ is at most $w$. 

As in the unweighted case, we construct a grid of cell size $r/\sqrt{2}$, and distribute the
points of $P$ among its cells,\footnote{%
	This step is slightly modified in the optimization algorithm, as in the unweighted case,
	to ensure that the grid size does not explicitly depend on the unknown $r^*$. Again, see a comment at the end of the section.}
denoting by $P_\tau$ the set of points of $P$ in cell $\tau$. Similar to what has been proposed 
in Wang and Xue~\cite{WangX20}, we run a modified version of Dijkstra's algorithm on $G_r$, starting 
from $s$, with several key changes that aim to overcome the difficulty that $G_r$
may have too many edges (up to quadratic in the worst case).

The high-level approach is the standard one: We maintain the set $\R$ of vertices that the algorithm
has reached (initialized to $\{s\}$), whose shortest paths from $s$ have already been computed, 
and the set $\U$ of vertices that have not yet been reached (the use of the set $\K$ of the dead points is not 
needed in this version). With each $a\in P$ we store its distance label $\dist(a)$. 
For $a\in \R$, this is the length of the shortest path from 
$s$ to $a$, and for $a\in \U$ it is the length of the shortest path to $a$ discovered so far.
We also store a pointer $\prev(a)$ to the point preceding $a$ on the path to $a$ (as above,
the true shortest path for $a\in \R$, and the shortest path discovered so far for $a\in \U$).

As long as $\U$ is not empty, we pick its element $v$ with minimal distance label $\dist(v)$, 
update the current distance labels of the (unreached) neighbors of $v$ (by the so-called 
{\sc Relax} operation), add $v$ to $\R$ and remove it from $\U$. However, we do more, by 
applying this step to all the points in the grid cell of $v$, and by moving all of them 
from $\U$ to $\R$. Moreover, the implementation of the steps of finding the unreached vertex 
with minimal distance label and of updating points in its neighborhood differs from the standard 
one, since we cannot afford to process all the edges of $G_r$. The high-level approach follows
that of Wang and Xue~\cite{WangX20}, but differs from it in certain aspects which simplify it considerably 
and make it suitable for the optimization algorithm, presented in Section~\ref{sec:w_opt}.

\paragraph*{Notation.}
For $a\in P$ define $\tau_a$ to be the grid cell containing $a$. Define $\NN(\tau)$ to be the
set of the $O(1)$ neighboring cells of $\tau$, which form a $5\times 5$ portion of the grid
centered at $\tau$, as defined for the unweighted case, but this time $\NN(\tau)$ contains 
$\tau$ too. For a subset $A\subseteq P$ and a grid cell $\tau$, put $A_\tau := A\cap\tau$.

\paragraph*{The modified Dijkstra's algorithm.}

Repeating and expanding the terminology in the preceding overview, the algorithm proceeds as follows.
It maintains a map $\dist$, the distance label, that maps each point $p\in P$ to the 
length of the shortest path from $s$ to $p$ that has been discovered so far. It ensures that, 
when $p$ is added to $\R$, $\dist(p)$ is equal to the length of the shortest path from $s$ to 
$p$, which we denote as $\optdist(p)$. The algorithm updates $\dist$ using a procedure $\UPDATE(B,C)$, 
where $B$ and $C$ are subsets of $P$ (actually of $\U$). Informally, we say that in this invocation, 
$B$ is updated according to $C$. The procedure performs, for each $u\in B$, the update
\[
\dist(u) := \min \Bigl\{ \dist(u),\; \min \left\{ \dist(v) + \|u-v\| \mid v\in C,\; \|u-v\| \le r \right\} \Bigr\} .
\]
Specific details about an efficient implementation of the procedure are given below. 

The algorithm of Wang and Xue~\cite{WangX20} processes the grid cell by cell. It processes
the cell containing the point of minimum distance label, which has just been moved from $\U$ to $\R$,
and then deletes from $\U$ all the points in that cell. Concretely, the algorithm proceeds as follows.

\medskip
\noindent
{\tt Initialize:} $\U := P;\quad \dist(s) := 0;\quad ({\tt forall}\; a \in P\setminus\{s\}) \; \dist(a) := \infty$. \\
({\tt while} $\U \ne\emptyset$) \\
\hspace*{1cm}(1) Find $v\in \U$ with the minimum value of $\dist(v)$. \\
\hspace*{1cm}(2) For each $a\in \U_{\tau_v}$, update $\dist(a)$ according to $\U_{\NN(\tau_v)}$,
by calling $\UPDATE\left(\U_{\tau_v}, \U_{\NN(\tau_v)}\right)$. \\
\hspace*{1cm}(3) For each $b\in \U_{\NN(\tau_v)}$, update $\dist(b)$ according to $\U_{\tau_v}$, 
by calling $\UPDATE\left(\U_{\NN(\tau_v)}, \U_{\tau_v}\right)$. \\
\hspace*{1cm}(4) Remove $\U_{\tau_v}$ from $\U$ (and add it to $\R$). \\
{\tt end while} 

\medskip
\noindent
Here $\U_{\NN(\tau_v)}$ is a shorthand notation for $\bigcup_{\tau\in\NN(\tau_v)} \U_\tau$. 

The correctness of the algorithm is a consequence of the following three claims. We present (our 
versions of) the proofs of these claims, originally given in \cite{WangX20}, for the sake of completeness.

\begin{claim} \label{claim1}
	Every point $x\in \R$ has a correct distance label, namely, $\dist(x) = \optdist(x)$.
\end{claim}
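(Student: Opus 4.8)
The plan is to prove Claim~\ref{claim1} by induction on the number of iterations of the while loop, carrying along two auxiliary invariants: that at the start of every iteration $\dist(u)\ge\optdist(u)$ for all $u\in P$ (so it suffices to prove the reverse inequality for $x\in\R$), and that for every $u\in\U$ the label $\dist(u)$ equals the length of a shortest $s$-to-$u$ path all of whose vertices except $u$ itself lie in $\R$ (taken to be $+\infty$ if no such path exists). The bound $\dist(u)\ge\optdist(u)$ needs no per-iteration bookkeeping: $\dist$ is $0$ at $s$ and $+\infty$ elsewhere initially, and is only ever decreased, through $\UPDATE$, to a value $\dist(b)+\|u-b\|$ with $(b,u)\in E_r$, which by an easy sub-induction is the length of an actual $s$-to-$u$ walk.

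For the inductive step I would consider the iteration that processes a cell $\tau_v$, with $v$ the $\U$-vertex of minimum $\dist$. First I would run the standard Dijkstra exchange argument to get $\dist(v)=\optdist(v)$: fix a shortest $s$-$v$ path $\pi$; if $\pi$ is not internal to $\R$, let $y$ be its first vertex outside $\R$ and $y^-$ its predecessor, so $y^-\in\R$ and the prefix of $\pi$ up to $y$ is an $\R$-internal path, whence (by the $\U$-invariant and the induction hypothesis for $y^-\in\R$) $\dist(y)\le\dist(y^-)+\|y^--y\|=\optdist(y^-)+\|y^--y\|=\optdist(y)\le\optdist(v)$; but $\dist(y)\ge\dist(v)$ since $v$ is the current minimum, and $\dist(v)\ge\optdist(v)$, forcing equality throughout. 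The crux is then to show that, after steps~(2) and~(3), every $a\in\U_{\tau_v}$ --- exactly the set of vertices that step~(4) moves into $\R$ --- also satisfies $\dist(a)=\optdist(a)$. For this I would use: (a)~the cell $\tau_v$ has diameter at most $r$, so $P_{\tau_v}$ induces a clique in $G_r$, and hence by the triangle inequality any shortest $s$-$a$ path may be replaced by one meeting $\tau_v$ in at most one vertex before $a$; (b)~every edge from a vertex $x\in\R$ into a vertex still in $\U$ was already relaxed, in step~(3) of the iteration that processed $\tau_x$ (note $\tau_a\in\NN(\tau_x)$ whenever $\|x-a\|\le r$, since then $a$ lies within two grid rows and columns of $\tau_x$); and (c)~every edge from a vertex $u\in\U$ with $\tau_u\in\NN(\tau_v)$ into $\tau_v$ is relaxed in the current step~(2). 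Tracing a suitably chosen shortest path to $a$ and combining (a)--(c) gives $\dist(a)\le\optdist(a)$, hence equality. Claim~\ref{claim1} is then preserved: the vertices newly placed in $\R$ are precisely those of $\U_{\tau_v}$, which now carry correct labels, while labels of vertices already in $\R$ are untouched by steps~(2)--(4), which only alter labels of vertices in $\U$; a routine unpacking of $\UPDATE$ re-establishes the $\U$-invariant for the next iteration.

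The step I expect to be the main obstacle is the batch argument in~(c): when a shortest path to some $a\in\tau_v$ reaches $a$ through a penultimate vertex $a^-$ that is still in $\U$ (lying in a neighbouring cell) rather than in $\R$, one must argue that $\dist(a^-)$ is nonetheless already correct at the moment $\tau_v$ is processed. This is exactly where the cell-at-a-time processing interacts with the global-minimum selection rule, and it seems to need the sharper observation that every $\U$-vertex attaining the current minimum value of $\optdist$ already carries a correct label, together with a structural argument that the cells lying ``upstream'' of $\tau_v$ along shortest paths have been processed early enough. I would isolate this as a separate lemma --- essentially the content of the companion claims that follow Claim~\ref{claim1} --- rather than inline it.
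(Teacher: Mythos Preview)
Your proposal is correct and follows essentially the same route as the paper: a simultaneous induction over iterations in which Claim~\ref{claim1} for the newly added batch $\U_{\tau_v}$ is reduced to the statement that the immediate predecessors of those points already carry correct labels---precisely the paper's Claim~\ref{claim3}, which you rightly isolate as the main obstacle and defer to a separate lemma, and whose proof in the paper is exactly the ``upstream cells already processed'' argument you anticipate (show $\optdist(y'')<\optdist(x)$ for the predecessor $y''$ of $a^-$, then invoke the second half of Claim~\ref{claim2}). One small caveat: your auxiliary $\U$-invariant, stated as an \emph{equality} $\dist(u)=\text{(shortest $\R$-internal distance)}$, need not hold, because step~(2) may relax $a\in\U_{\tau_v}$ through a vertex $b\in\U_{\NN(\tau_v)}\setminus\tau_v$ that remains in $\U$; only the inequality $\dist(u)\le\text{(shortest $\R$-internal distance)}$ is guaranteed, but that is all your exchange argument for $v$ actually uses.
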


\begin{claim} \label{claim2}
	The point $v$ of $\U$ with the minimum distance label at the beginning of a Dijkstra step has a 
	correct distance label. Closer points to $s$ have already been deleted from $\U$.
\end{claim}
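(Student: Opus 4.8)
The plan is to argue by induction on the progress of the modified Dijkstra's algorithm, exactly as in a standard correctness proof for Dijkstra, but taking care of the fact that we process an entire grid cell at a time. Let $v$ be the point of $\U$ of minimum distance label at the start of some Dijkstra step, and let $\pi$ be a true shortest path from $s$ to $v$, so $\optdist(v)$ is the weight of $\pi$. Let $y$ be the first vertex on $\pi$ (going from $s$) that still lies in $\U$; such a $y$ exists since $v\in\U$, and $y\neq s$ since $s\in\R$ (we may assume $v\neq s$, the case $v=s$ being trivial). Let $x$ be the vertex immediately preceding $y$ on $\pi$, so $x\in\R$. By Claim~\ref{claim1}, $\dist(x)=\optdist(x)$.

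The first key step is to show $\dist(y)=\optdist(y)$ at this moment, i.e.\ that the edge $(x,y)$ of $\pi$ has already been relaxed. This is where the cell-at-a-time structure enters: $x$ was moved into $\R$ together with all of $\U_{\tau_x}$ in some earlier step (step (4)), and at that step either step~(2) (if $y\in\tau_x$) or step~(3) (if $y\in\NN(\tau_x)\setminus\{\tau_x\}$) performed the update of $\dist(y)$ using $\dist(x)+\|x-y\|$. Note $\|x-y\|\le r$ since $(x,y)$ is an edge of $G_r$, and $y\in\NN(\tau_x)$ because every point within distance $r$ of $x$ lies in a cell of $\NN(\tau_x)$ (this uses the grid cell size $r/\sqrt{2}$, so that $\NN(\tau)$, being a $5\times5$ block, covers the $r$-neighborhood of every point of $\tau$). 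Hence at that earlier step $\dist(y)$ was set to at most $\dist(x)+\|x-y\|=\optdist(x)+\|x-y\|=\optdist(y)$; since $\dist$ never increases and never drops below $\optdist(y)$ (a straightforward invariant: any finite value of $\dist(y)$ is realized by an actual path), we get $\dist(y)=\optdist(y)$ now.

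The second key step is the standard minimality argument: since $y$ lies on a shortest path to $v$ and all edge weights are nonnegative, $\optdist(y)\le\optdist(v)$. Combining, $\dist(y)=\optdist(y)\le\optdist(v)\le\dist(v)$, where the last inequality is the general invariant $\dist(\cdot)\ge\optdist(\cdot)$ applied to $v$. But $v$ was chosen in step~(1) as the element of $\U$ of \emph{minimum} $\dist$, and $y\in\U$, so $\dist(v)\le\dist(y)$. Hence all these inequalities are equalities; in particular $\dist(v)=\optdist(v)$, which is the first assertion of the claim. The second assertion --- that every point strictly closer to $s$ than $v$ has already been deleted from $\U$ --- then follows: if some $z\notin\R$ had $\optdist(z)<\optdist(v)=\dist(v)$, then running the same argument with $z$ in place of $v$ produces a vertex $y'\in\U$ on a shortest $s$-$z$ path with $\dist(y')=\optdist(y')\le\optdist(z)<\dist(v)$, contradicting the minimality of $\dist(v)$ over $\U$.

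The main obstacle is the first key step: one must be sure that the edge $(x,y)$ was actually relaxed at the right moment, which relies on (a) the geometric fact that $y\in\NN(\tau_x)$ whenever $\|x-y\|\le r$, forcing $y$ to be touched by step~(2) or~(3) when $\tau_x$ was processed; and (b) the invariant that once $x$ entered $\R$ its label equalled $\optdist(x)$ (Claim~\ref{claim1}), so the relaxation used the correct value. Everything else is the textbook Dijkstra argument, plus the bookkeeping that ``processing $\tau_x$'' updates \emph{all} of $\NN(\tau_x)\cap\U$, not just single neighbors.
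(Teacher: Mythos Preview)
Your approach is essentially the same as the paper's: both pick the last reached vertex (your $x$, the paper's $z$) on a shortest path to $v$, observe that its successor (your $y$, the paper's $z^+$) got relaxed when the cell of $x$ was processed, and then use the minimality of $\dist(v)$ to force $\dist(v)=\optdist(v)$. The second part is handled identically.

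Two small points. First, your case ``$y\in\tau_x$'' cannot occur: if $y$ were in $\tau_x$ it would have been removed from $\U$ together with $x$ in step~(4), contradicting $y\in\U$. So only step~(3) is relevant.

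Second, and more importantly, there is a timing gap in your justification that the relaxation ``used the correct value'' $\dist(x)=\optdist(x)$. You appeal to Claim~\ref{claim1}, which asserts correctness of $\dist(x)$ \emph{once $x$ is in $\R$}. But the relaxation of the edge $(x,y)$ is performed in step~(3), \emph{before} $x$ is moved to $\R$ in step~(4). From Claim~\ref{claim1} alone you only get that $\dist(x)=\optdist(x)$ after step~(4); since labels are nonincreasing this tells you $\dist(x)\ge\optdist(x)$ at step~(3), which is the wrong direction. What you actually need is that $\dist(x)=\optdist(x)$ already after step~(2) of that iteration, and this is exactly what the paper's joint induction delivers via Claim~\ref{claim3}: the true predecessor of $x$ has a correct label at the start of the iteration, so step~(2) sets $\dist(x)=\optdist(x)$, and step~(3) then uses this correct value. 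Your argument becomes complete once you invoke Claim~\ref{claim3} (or, equivalently, the \emph{proof} rather than just the statement of Claim~\ref{claim1}) at this point.
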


\begin{claim} \label{claim3}
	All the immediate predecessors (on the true shortest paths) of points in the grid cell 
	of $v$ have correct distance labels at the beginning of the Dijkstra step of $v$.
\end{claim}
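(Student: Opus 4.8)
The plan is to prove the three claims together by a simultaneous induction on the progress of the algorithm, i.e., on the number of Dijkstra steps executed (equivalently, on the cells removed from $\U$), since Claims~\ref{claim1}--\ref{claim3} reinforce one another: Claim~\ref{claim1} feeds the shortest-path argument in Claim~\ref{claim2}, Claim~\ref{claim2} guarantees that the cell of $v$ is processed in the right order, and Claim~\ref{claim3} is exactly the ingredient one needs to show that the $\UPDATE$ calls in steps (2)--(3) have in fact set $\dist(a)=\optdist(a)$ for every $a\in\U_{\tau_v}$ before step (4) moves them to $\R$, which is what closes the induction for Claim~\ref{claim1} at the next step. So in the write-up I would state an inductive hypothesis ``at the beginning of the Dijkstra step that processes the $m$-th cell, Claims~\ref{claim1}--\ref{claim3} hold'' and establish the inductive step for all three at once.

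For Claim~\ref{claim3} specifically: let $v$ be the point of minimum distance label at the current step, let $a$ be any point in the cell $\tau_v$, and let $u=\prev^\ast(a)$ be the immediate predecessor of $a$ on a true shortest $s$-$a$ path, so that $\optdist(a)=\optdist(u)+\|u-a\|$ and the edge $(u,a)$ is present in $G_r$, i.e.\ $\|u-a\|\le r$. First I would handle the trivial case $u=s$ (or more generally $u\in\R$ already), where $\dist(u)=\optdist(u)$ by Claim~\ref{claim1} of the inductive hypothesis and we are done. Otherwise $u\in\U$, and the key geometric observation is that since $\|u-a\|\le r$ and the grid has cell side $r/\sqrt2$, the cell $\tau_u$ lies in the $5\times5$ block $\NN(\tau_v)$ centered at $\tau_v=\tau_a$; hence $u\in\U_{\NN(\tau_v)}$. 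The argument then splits on whether $\optdist(u)<\optdist(v)$ or $\optdist(u)\ge\optdist(v)$. In the first case $u$ is strictly closer to $s$ than $v$, so by Claim~\ref{claim2} (inductive hypothesis) $u$ has already been removed from $\U$, contradicting $u\in\U$ — so this case does not occur once $v$ is genuinely the current minimum. In the remaining case $\optdist(u)\ge\optdist(v)$: since $u$ is a predecessor on a shortest path to $a$, we have $\optdist(u)\le\optdist(a)\le\optdist(v)$ (the last inequality because $v$ has the minimum label among points of $\U$, and $a\in\U_{\tau_v}\subseteq\U$, and by the inductive Claim~\ref{claim2} $\dist$-values of yet-unreached points are $\ge\optdist(v)$ — I would spell this chain out carefully). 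Hence $\optdist(u)=\optdist(v)$, and now I would invoke the key monotonicity lemma underlying Wang--Xue's scheme: points with distance label equal to the current minimum that lie in $\NN(\tau_v)$ have already had their labels correctly set. Concretely, any point $u$ with $\optdist(u)=\optdist(v)$ and $u$ in the $5\times5$ neighborhood is relaxed against the cell currently being processed (or was relaxed when an earlier cell at the same distance level was processed, by a secondary induction on the order in which equal-distance cells are handled), so $\dist(u)=\optdist(u)$ at the beginning of this step. That establishes Claim~\ref{claim3}.

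The main obstacle, and the part I would be most careful about, is the tie-breaking / equal-distance situation: when $\optdist(u)=\optdist(v)$ and $u\notin\tau_v$, one must argue that $u$'s label was already corrected by a previous $\UPDATE$ invocation. This is not automatic from plain Dijkstra because we process whole cells at a time rather than single vertices, so the usual ``strictly smaller label $\Rightarrow$ already finalized'' argument degrades to ``$\le$''. The resolution is geometric: if $\optdist(u)=\optdist(v)$ then there is a shortest path to $u$ whose last edge comes from a point $u'$ with $\optdist(u')<\optdist(u)$ (or $u'=s$), and that $u'$ was finalized strictly earlier with a correct label; moreover the cell of $u'$ lies within the $5\times5$ block of the cell of $u$, so when $u'$'s cell was processed, step (3) called $\UPDATE(\U_{\NN(\tau_{u'})},\U_{\tau_{u'}})$, which included $u$ in the $B$-set and relaxed it, giving $\dist(u)\le\optdist(u')+\|u'-u\|=\optdist(u)$; combined with the trivial lower bound $\dist(u)\ge\optdist(u)$ this yields equality. (The degenerate subcase $\optdist(u)=0$, i.e.\ $u=s$, is immediate.) I would fold this into the simultaneous induction by ordering the Dijkstra steps primarily by the distance value being finalized, so that ``strictly earlier'' is well-defined even among cells at the same distance level, and all references to the inductive hypothesis stay consistent. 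Everything else — the grid-distance bound $\tau_u\in\NN(\tau_v)$, and the elementary inequalities among $\optdist$ values — is routine.
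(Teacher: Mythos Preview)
Your inductive framework is the right one, and the idea of falling back on the predecessor $u'=\prev^*(u)$ of $u$ is exactly what is needed. But the case analysis that gets you there is broken.

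The inequality $\optdist(a)\le\optdist(v)$ is false. At the beginning of the step, $v$ has the minimum label in $\U$ and, by Claim~\ref{claim2}, every point still in $\U$ satisfies $\optdist(\cdot)\ge\optdist(v)$. Hence for $a\in\U_{\tau_v}$ you get $\optdist(a)\ge\optdist(v)$, not $\le$. Your chain $\optdist(u)\le\optdist(a)\le\optdist(v)$ therefore collapses, and you never reach the conclusion $\optdist(u)=\optdist(v)$. The generic situation is precisely $\optdist(v)<\optdist(u)<\optdist(a)$ with $u\in\U$, which your argument does not cover: knowing only $\optdist(u')<\optdist(u)$ for $u'=\prev^*(u)$ does not give $\optdist(u')<\optdist(v)$, so you cannot yet conclude that $u'$'s cell has already been processed.

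The paper closes this gap with a short geometric argument that yields a \emph{strict} inequality $\optdist(u')<\optdist(v)$ directly, without any case split on $\optdist(u)$. Since $u'\to u\to a$ are two consecutive edges of a shortest path, one must have $\|u'-u\|+\|u-a\|>r$ (else $(u',a)$ would be an edge of $G_r$ and would shortcut $u$). Thus
\[
\optdist(u') \;=\; \optdist(a)-\|u'-u\|-\|u-a\| \;<\; \optdist(a)-r \;\le\; \bigl(\optdist(v)+\|v-a\|\bigr)-r \;\le\; \optdist(v),
\]
using that $v$ and $a$ lie in the same cell of side $r/\sqrt2$, so $\|v-a\|\le r$. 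Now Claim~\ref{claim2} gives that $u'$ has already been removed from $\U$, and step~(3) of the iteration that removed $\tau_{u'}$ updated $u\in\U_{\NN(\tau_{u'})}$, setting $\dist(u)\le\optdist(u')+\|u'-u\|=\optdist(u)$. This is the missing ingredient; once you insert it, your overall inductive scheme goes through.
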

\medskip
\noindent{\bf Proof of the claims.}
We establish the invariants in Claims~\ref{claim1}--\ref{claim3} by induction on the 
Dijkstra steps. The invariants hold trivially (some vacuously) at the start of execution. 
Assuming that they hold at the beginning of the $i$-th iteration, for some $i$, we argue 
that they hold at the end of that iteration, that is at the beginning of iteration $i+1$. 
Let $v$ be the point chosen at iteration $i$, namely 
the point of $\U$ with the smallest distance label at this step, and put $\tau = \tau_v$.

Invariant \ref{claim1} (at the end of the iteration) follows from Claims~\ref{claim2} 
and \ref{claim3} and the update rules. 
Indeed, the points deleted from $\U$ at iteration $i$ are those that lie in $\tau$. 
For each such point $v'$, its predecessor $p(v')$ necessarily lies in $\NN(\tau)$. By the induction hypothesis
for Claim~\ref{claim3}, $p(v')$ has a correct distance label at the start of iteration $i$, 
and the first update step ensures that $v'$ gets a correct distance label, as claimed.

For invariant \ref{claim2}, let $x$ be the point with minimum distance label in $\U$
(at the end of iteration $i$). By step (4), $x\notin \tau$.
Assume, by contradiction, that $\optdist(x) < \dist(x)$. 
Consider the (true) shortest path $\pi(x)$ from $s$ to $x$. 
Let $z$ be the last point removed from $\U$ on $\pi(x)$ when we walk along the path from 
$s$ to $x$, i.e., $z$ has been removed at some iteration $j\le i$ 
(it may have been moved at step (4) of iteration $i$). By the induction hypothesis for Claim~\ref{claim1}, 
$z$ has a correct distance label (by the preceding argument, this also holds when $z$ 
has just been removed), and the second update step (3) at iteration $j$ ensures 
that the successor $z^+$ of $z$ on $\pi(x)$ has a correct distance label at the end
of iteration $j$. By our assumption, $z^+$ cannot be $x$ itself, so its distance
label is smaller than $\dist(x)$. Since $z^+$ has not yet been deleted from $\U$, 
we get a contradiction to the minimality of $\dist(x)$, and the first part of 
Claim~\ref{claim2} follows.

For the second part, let $x'$ be a point closer to $s$ than $x$. Assume to the contrary 
that $x'$ has not yet been deleted from $\U$.  By assumption and the definition of $x$, we thus have
\[
\optdist(x') < \optdist(x) \le \dist(x) \le \dist(x') ,
\]
that is, the true distance of $x'$ from $s$ is strictly smaller than its distance 
label. Consider now the shortest path $\pi(x')$ to $x'$, and apply the same 
reasoning as above to obtain a contradiction. This completes the induction step for Claim~\ref{claim2}. 

For Invariant \ref{claim3}, let $x$ be the point in the preceding argument ($x$ will be
the point $v$ of minimum distance label in $\U$ at the beginning of the next iteration).
Let $y$ be a point in the grid cell $\tau_x$ of $x$, and let $y' = \prev(y)$ be the predecessor 
of $y$ on the shortest path $\pi(y)$ from $s$ to $y$. We claim that the predecessor 
$y'' = \prev(y')$ of $y'$ on $\pi(y)$ has already been deleted from $\U$. This would imply, 
by construction, that $y'$ has a correct distance label, which is what we need to show. 
By the triangle inequality, $\|y''-y\| < \|y''-y'\|+\|y'-y\|$ (in general position, we 
do not have an equality). If the right-hand side were $\le r$, $y$ would have been a 
neighbor of $y''$ in $G_r$, and then $y''$ should have been the predecessor of $y$ on 
$\pi(y)$, rather than $y'$. We thus have

\begin{align*}
\optdist(y'') + r & < \optdist(y'') + \|y''-y'\|+\|y'-y\| \\
& = \optdist(y) \le \optdist(x) + \|x-y\| \le \optdist(x) + r.
\end{align*}
It follows that $\optdist(y'') < \optdist(x)$, so, by Invariant \ref{claim2}, $y''$ has 
already been removed from $\U$, implying that $y'$ already has the correct distance label, as claimed.

We have thus established the induction step for all three claims, which in turn completes the proof of the claims.
$\Box$

\paragraph*{Implementing $\UPDATE$.}

To perform the $\UPDATE$ operation at Step (2) of the algorithm, we construct the additively-weighted 
Voronoi diagram $\Vor\left(\U_{\NN(\tau_v)}\right)$, where the weight of a point is its distance label. 
We then search in the diagram with each $a\in \U_{\tau_v}$, and obtain its nearest weighted neighbor $b$.
We perform a {\sc Relax}, by assigning $\dist(a) :=  \min\Bigl\{ \dist(a),\; \dist(b) + \|a-b\| \Bigr\}$, 
and also set $\prev(a) := b$, if the distance label has changed.

To argue the correctness of this procedure, we need to show that, for each $a$ and $b$ as above,
$\|a-b\| \le r$, as this will imply that the Voronoi nearest neighbor is also the nearest neighbor 
(in the path-length sense) in $G_r$. That is, $\dist(b) + \|a-b\|$ is the minimum value of 
$\dist(b') + \|a-b'\|$, over all $G_r$-neighbors $b'$ of $a$ (clearly, any such neighbor must
belong to $\U_{\NN(\tau_v)}$). For this we use the following lemma.

\begin{lemma} \label{lem:near}
	Let $q$ be a query point, and let $A$ be a subset of $P$ with the property that the element 
	$u\in A$ with the smallest distance label satisfies $\|q-u\| \le r$. Then the nearest weighted
	Voronoi neighbor $w$ of $q$ in $A$ also satisfies $\|q-w\|\le r$.
\end{lemma}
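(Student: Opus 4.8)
The plan is to argue by contradiction, mirroring the structure of the proofs of Lemmas~\ref{lem:Utree} and~\ref{lem:Rtree}. Suppose that the nearest weighted Voronoi neighbor $w$ of $q$ in $A$ satisfies $\|q-w\| > r$. Let $u$ be the element of $A$ with the smallest distance label, for which we are given $\|q-u\| \le r$. The key relations are: (a) the defining property of $u$, namely $\dist(u) \le \dist(w)$; and (b) the defining property of $w$ as the weighted Voronoi nearest neighbor, namely $\dist(w) + \|q-w\| \le \dist(u) + \|q-u\|$. From the contradiction hypothesis $\|q-w\| > r \ge \|q-u\|$, we would get $\dist(w) + \|q-w\| > \dist(w) + \|q-u\| \ge$ (using (a)) a quantity we want to compare with $\dist(u) + \|q-u\|$.

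First I would write down (a) and (b) explicitly, then chain them: from (b), $\dist(w) + \|q-w\| \le \dist(u) + \|q-u\|$, hence $\dist(w) - \dist(u) \le \|q-u\| - \|q-w\|$. Since $\|q-u\| \le r < \|q-w\|$, the right-hand side is strictly negative, so $\dist(w) < \dist(u)$, contradicting (a) (which asserts $\dist(u)$ is the \emph{smallest} distance label in $A$, hence $\dist(u) \le \dist(w)$). This closes the argument. The one subtlety is that the weighted Voronoi diagram is built with weights equal to distance labels, so ``nearest weighted neighbor'' means minimizing $\dist(\cdot) + \|q-\cdot\|$ over $A$; I would state this convention at the outset so that (b) is literally the definition.

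I expect the main obstacle to be purely expository rather than mathematical: one must be careful that the weighted Voronoi nearest neighbor genuinely minimizes $\dist(v) + \|q-v\|$ over \emph{all} $v \in A$ (including $u$ itself), so that inequality (b) is available with $u$ on the right-hand side. Given the construction of $\Vor(\U_{\NN(\tau_v)})$ with weights being distance labels, this is immediate, and then the two-line chain above finishes the proof. No case analysis or geometry beyond the triangle-free comparison of the two candidates $u$ and $w$ is needed; in particular, unlike Lemmas~\ref{lem:Utree} and~\ref{lem:Rtree}, we do not need to split a tree into canonical subsets, since here $A$ is a single explicit set.
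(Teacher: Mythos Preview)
Your proof is correct and essentially identical to the paper's: both argue by contradiction from $\|q-w\| > r \ge \|q-u\|$ together with $\dist(u) \le \dist(w)$, obtaining an inequality that violates the defining minimality of $w$ (the paper derives $\dist(u) + \|q-u\| < \dist(w) + \|q-w\|$ directly, whereas you rearrange to get $\dist(w) < \dist(u)$, but these are the same two-line contradiction).
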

\medskip
\noindent{\bf Proof.}
Suppose to the contrary that $\|q-w\| > r$, so $\|q-w\| > \|q-u\|$. 
By assumption, we also have $\dist(u) \le \dist(w)$. Hence
\[
\dist(u) + \|q-u\| < \dist(w) + \|q-w\| ,
\]
contradicting the assumption that $w$ is the nearest weighted Voronoi neighbor of $q$ in $A$.
$\Box$

The asserted claim then follows, with $q, u, w$ set to $a, v, b$, respectively,
since every pair of points in $\tau_v$ lie at distance at most $r$ apart.\footnote{%
  The argument also holds when $v$ is the only point in $\tau_v$.}

Note the subtle issue that, since $\U_{\NN(\tau_v)}$ contains $\U_{\tau_v}$, the nearest weighted 
neighbor of a point $a\in \U_{\tau_v}$ might be $a$ itself. In this case the resulting minimum 
weighted distance to $a$ would be $\dist(a) + \|a-a\| = \dist(a)$. This simply means that no
neighbor of $a$ in $\U_{\NN(\tau_v)}$ can improve its distance label.

The second $\UPDATE$ call, at Step (3) of the algorithm, is harder to analyze. The main difficulty 
is that now the nearest weighted Voronoi neighbor in $U_{\tau_v}$ of a point $a\in \U_{\NN(\tau_v)}$ 
might lie at distance $> r$ from $a$. To address this issue, we apply Lemma~\ref{lem:near} as follows. 
(Note the similarity of this approach to the technique used in Section~\ref{sec:4.1} for the case where the 
edge weights are the distances between the disk centers.)
We sort the points of $\U_{\tau_v}$ in increasing order of their distance labels, and store them in 
this order at the leaves of a balanced binary tree $Q$. For each node $\xi$ of $Q$, we construct the 
additively weighted diagram $\Vor(\U^{(\xi)})$, where $\U^{(\xi)}$ is the set of points of $\U_{\tau_v}$
stored at the leaves of the subtree rooted at $\xi$, and the standard unweighted Voronoi diagram 
$\Vor_0(\U^{(\xi)})$. We store both diagrams at $\xi$. 

Given a query point $q\in U_{\NN(\tau_v)}$, we first search for the leftmost leaf of $Q$, representing
a point $w\in \U_{\tau_v}$, that satisfies $\|q-w\| \le r$. To do so, we first find the point nearest 
to $q$ in $\Vor_0(\U^{(\rm root)})$. If its distance from $q$ is greater than $r$, we stop the search 
and report that $q$ has no neighbors in $G_r$ in $\U_{\tau_v}$. Otherwise, we search in $Q$ as follows. When 
we are at a node $\xi$, with a left child $\eta$ and a right child $\zeta$, we locate $q$ in 
$\Vor_0(\U^{(\eta)})$. If the nearest neighbor to $q$ in that diagram is at distance at most $r$, we 
continue the search at the left child $\eta$, and otherwise we continue the search at the right child $\zeta$.

Let $\xi$ be the leaf at which the search terminates. Clearly, it is the leftmost leaf of $Q$ satisfying
the above property. We represent the set $B$ of leaves of $Q$ that lie to the right of $\xi$, including 
$\xi$ itself, as the disjoint union of $O(\log n)$ canonical subsets, each being the set of leaves of 
some suitable subtree of $Q$. We now locate $q$ at each of the weighted Voronoi diagrams stored at the 
roots of these subtrees, and obtain a corresponding nearest weighted Voronoi neighbor of $q$ in that 
diagram. We retain only those neighbors that lie at distance at most $r$ from $q$, and return the point 
$v$ among the surviving neighbors with the smallest weighted Voronoi distance to $q$. It easily follows
that Lemma~\ref{lem:near} guarantees that $\|q-v\| \le r$, since the element of $B$ with the 
smallest distance label, namely $w$, satisfies $\|q-w\|\le r$ by construction.

The overall cost of the search is $O(\log^2n)$.

An iteration of the while loop in the algorithm thus
takes 
\[
O\left( \left( |\U_{\tau_v}| + |\U_{\NN(\tau_v)}| \right)\log^2 n\right)
\]
time, where $v$ is the point of $\U$ with the smallest distance label.
Each cell $\tau$ can play the role of $\tau_v$ in at most one iteration of the loop, 
because after that iteration all its points are deleted from $\U$. A cell $\tau$ can
participate in a neighborhood $\NN(\tau_v)$ in only $O(1)$ iterations, once for each cell $\tau'$ in
its neighborhood. It follows that the overall cost of the procedure is $O(n\log^2n)$.

The above analysis implies the correctness and efficiency of the whole procedure. That is, we have shown:
\begin{theorem} \label{thm:dec}
	Given $P$, $s$, $t$, $w$ and $r$ as above, we can determine whether the shortest path from 
	$s$ to $t$ in $G_r$ is of length at most $w$, in $O(n\log^2n)$ time.
\end{theorem}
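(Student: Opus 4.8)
The plan is to show that the modified Dijkstra procedure described above is correct and runs in $O(n\log^2 n)$ time. Correctness has already been reduced, via Claims~\ref{claim1}--\ref{claim3} and the two $\UPDATE$ steps, to verifying that the Voronoi-based implementation of $\UPDATE$ indeed realizes the intended minimum, i.e.\ that whenever the nearest weighted Voronoi neighbor $b$ of a query point $a$ is returned, the pair $(a,b)$ is a genuine edge of $G_r$ (so $\|a-b\|\le r$), and conversely that no true $G_r$-neighbor of $a$ offering a smaller relaxed value is missed. For Step~(2) this is immediate from Lemma~\ref{lem:near} with $q=a$, $u=v$, $w=b$, together with the fact that any two points of $\tau_v$ are at distance at most $r=\sqrt{2}\cdot(r/\sqrt 2)$ (the cell diameter), so the hypothesis of the lemma holds. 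For Step~(3), where the Voronoi nearest neighbor in $\U_{\tau_v}$ of a point $a\in\U_{\NN(\tau_v)}$ may lie farther than $r$, I would invoke the balanced-tree construction: the search in $Q$ using the unweighted diagrams $\Vor_0(\U^{(\xi)})$ isolates the leftmost leaf $w$ with $\|a-w\|\le r$, all canonical subtrees to the right of it then consist of points whose $w$-coordinate (the sorted key, the distance label) is at least $\dd(w)$, and among those the set $B$ of such points contains $w$ itself; applying Lemma~\ref{lem:near} to $B$ (with the smallest-distance-label element being $w$, which satisfies $\|a-w\|\le r$) guarantees that the returned weighted nearest neighbor $v$ satisfies $\|a-v\|\le r$, and the explicit distance filter we apply removes the rest. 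It remains to note that points of $\U_{\NN(\tau_v)}$ that have no $G_r$-neighbor at all in $\U_{\tau_v}$ are correctly detected by the initial test against $\Vor_0(\U^{(\mathrm{root})})$, and that the subtle self-neighbor phenomenon ($a$ being its own nearest weighted neighbor when $a\in\U_{\tau_v}$) is harmless, as it merely leaves $\dist(a)$ unchanged.

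For the running time I would argue as in the excerpt: a single iteration of the while loop performs one point-location query per point of $\U_{\tau_v}$ into a freshly built additively-weighted diagram (cost $O(|\U_{\tau_v}|\log n)$ for construction plus location), and then, for each of the $O(|\U_{\NN(\tau_v)}|)$ points of the neighborhood, one tree search in $Q$ costing $O(\log n)$ levels, each level requiring a point location in a static Voronoi diagram, hence $O(\log^2 n)$ per query; constructing all the diagrams stored in $Q$ over all its nodes costs $O(|\U_{\tau_v}|\log^2 n)$ by summing $O(|\U^{(\xi)}|\log n)$ over a level of $Q$ and multiplying by $O(\log n)$ levels. Thus an iteration costs $O\big((|\U_{\tau_v}|+|\U_{\NN(\tau_v)}|)\log^2 n\big)$. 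Summing over all iterations, each cell $\tau$ plays the role of $\tau_v$ at most once (after that all its points are removed from $\U$), and each cell participates in $\NN(\tau_v)$ for only $O(1)$ choices of $\tau_v$; hence $\sum_v |\U_{\tau_v}| = O(n)$ and $\sum_v |\U_{\NN(\tau_v)}| = O(n)$, giving a total of $O(n\log^2 n)$. The grid construction and point distribution take $O(n)$ time, and extracting the minimum-distance-label element of $\U$ in Step~(1) is handled by a standard priority queue updated during the $\UPDATE$ calls, contributing only $O(n\log n)$ overall.

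The main obstacle, and the one deserving the most care, is the analysis of the second $\UPDATE$ call in Step~(3): unlike Step~(2), the cell $\tau_v$ is not contained in a radius-$r$ neighborhood of an arbitrary query point $a\in\NN(\tau_v)$, so one genuinely needs the ``filtered nearest neighbor'' machinery — sorting $\U_{\tau_v}$ by distance label, the double Voronoi diagrams at each node of $Q$, and the careful application of Lemma~\ref{lem:near} to the canonical decomposition $B$ — to ensure both that we never miss a relaxing edge and that we never relax along a non-edge. Once this is in place, everything else is bookkeeping: the correctness proof piggybacks on Claims~\ref{claim1}--\ref{claim3}, and the time bound is a routine charging of each cell to $O(1)$ iterations.
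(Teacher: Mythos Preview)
Your proposal is correct and follows essentially the same route as the paper: correctness via Claims~\ref{claim1}--\ref{claim3}, Lemma~\ref{lem:near} applied directly for Step~(2), the balanced tree $Q$ with its paired weighted/unweighted diagrams plus a final application of Lemma~\ref{lem:near} to the canonical set $B$ for Step~(3), and the same per-iteration cost and charging argument for the $O(n\log^2 n)$ bound. The only additions you make beyond the paper's text are the explicit accounting for building the diagrams at all nodes of $Q$ and the remark about the priority queue for Step~(1), both of which are routine and consistent with the paper's implicit treatment.
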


\paragraph{The optimization procedure.}
The optimization procedure is implemented in much the same way as before. 
One notable new ingredient is the grid construction. The distribution of the points amid the grid cells is 
implemented using the floor function, which is not comparison based and thus does not fit the interval 
shrinking methodology. Instead we use the following approach.

We note that, in the unweighted case, $r^*$ must satisfy $\frac1k \|t-s\| \le r^* \le \|t-s\|$, where the left
inequality follows from the triangle inequality. Similarly, in the weighted case $r^*$ must satisfy 
$\frac1n \|t-s\| \le r^* \le \|t-s\|$, where again the left
inequality follows from the triangle inequality.
Say we are in the weighted case (the analysis in the unweighted case is similar).
We fix some sufficiently small $\eps>0$, and construct a sequence of values 
$r_1 = \frac1k \|t-s\|, r_2,\ldots, r_\mu = \|t-s\|$, so that $r_{j+1} \le (1+\eps)r_j$, 
for $j=1,\ldots,\mu-1$, and $\mu = O\left( \frac{1}{\eps}\log k \right)$. We run a binary 
search through the sequence of the $r_j$'s, using the (unsimulated) decision procedure to 
guide the search, and thereby determine, in $O(n\log n\log\mu) = O(n\log n\log\log n)$ time, 
the index $j$ satisfying $r_j \le r^* < r_{j+1} \le (1+\eps)r_j$.

We construct a grid of cell size $r_j/\sqrt{2}$ and distribute the points of $P$ among its cells. 
For a sufficiently small choice of $\eps$, the grid has similar properties as the grid of cell 
size $r^*/\sqrt{2}$, namely, all the cells that contain points of $P$ at distance at most $r^*$ 
from a cell $\tau$ lie in $N(\tau)$. Note that this part of the modified decision procedure does 
not depend explicitly on $r^*$, and so requires no simulation, and it runs in $O(n)$ time. 

Except for this technical issue, the procedure runs in the same manner as before, using interval shrinking 
and bifurcation, and its running time is $O^*(n^{6/5})$, as before.

\bibliography{rsp_udg}

\end{document}